\documentclass{article}
\usepackage{amsmath, amsthm, amssymb, braket, enumitem, graphicx, subcaption, cite}
\usepackage[colorlinks=true]{hyperref}
\usepackage[margin=2cm]{geometry}

\DeclareMathOperator{\tr}{Tr}

\newtheorem{theorem}{Theorem}

\newtheorem{corollary}{Corollary}

\newtheorem{definition}{Definition}

\newtheorem{lemma}{Lemma}

\DeclareMathOperator{\Supp}{Supp}

\title{$r$-deformed $\alpha$-$z$-R\'enyi relative entropy}
\author{
	Srikrishna Maity $^a$\thanks{Email: \texttt{krishmaitybb@gmail.com}}, Shigeru Furuichi $^{b,c}$ \thanks{Email: \texttt{furuichi.shigeru@nihon-u.ac.jp}}, Supriyo Dutta $^a$ \thanks{Email: \texttt{dosupriyo@gmail.com}}
	\vspace{.3cm}\\
	$^a$\small{Department of Mathematics, National Institute of Technology Agartala} \\ \small{Jirania, West Tripura, Tripura, India - 799046.}
	\vspace{.3cm}\\
	$^b$\small{Department of Information Science College of Humanities and Sciences, Nihon University} \\
	\small{3-25-40, Sakurajyousui, Setagaya-ku, Tokyo, 156-8550, Japan.}
	\vspace{.3cm}\\
	$^c$\small{Department of Mathematics, Saveetha School of Engineering,}\\
	\small{SIMATS Thandalam, Chennai 602105, Tamilnadu, India.}
}
\date{} 

\begin{document}
	\maketitle 
	
	\begin{abstract}
		In this article, we consider the $r$-logarithm for defining three-parameter family of R\'{e}nyi relative entropies that are generalization of the $\alpha$-$z$-R\'{e}nyi relative entropies. All the members of $r$-deformed $\alpha$-$z$-R\'{e}nyi relative entropies satisfy the necessary axioms to be a divergence. We expose the range of parameters $\alpha$, $z$ and $r$ for which the data processing inequality holds. We also establish that $r$-deformed $\alpha$-$z$-R\'{e}nyi relative entropy is an upper bound of the Tsallis relative entropy. Now, we have two upper bounds of the Tsallis relative entropy, which are $r$-deformed $\alpha$-$z$-R\'{e}nyi relative entropy and the other one, which is discussed in literature \cite{fujii2021matrix}. We investigate the order relationship between these two upper bounds of the Tsallis relative entropy. We observe that our new upper bound is more tighter when applicable to the density operators.
		
		\textbf{Keywords:} $\alpha$-$z$-R\'enyi relative entropy, Tsallis relative entropy, Data-processing inequality, $r$-deformed logarithm, matrix inequalities, upper bound of Tsallis relative entropy.
	\end{abstract}
	
	
	\section{Introduction}
	
		Let $P = (p_1, p_2, \dots p_n)$ and $Q = (q_1, q_2, \dots q_n)$ be discrete probability distributions, for simplicity. The R\'{e}nyi divergence is defined by 
		\begin{equation}
			R_\alpha(P||Q) = \frac{1}{\alpha - 1} \log \left(\sum_{i = 1}^n p_i^\alpha q_i^{1 - \alpha} \right),
		\end{equation}
		where $\alpha > 1$ \cite{van2014renyi}. In quantum information theory, we substitute the classical R\'{e}nyi divergence by quantum R\'{e}nyi divergence. Now the classical definition of the R\'{e}nyi divergence is generalized for density operators. The quantum R\'{e}nyi $\alpha$-relative entropy \cite{shao2017quantum} between two density operators $\rho$ and $\sigma$ is defined by 
		\begin{equation}\label{Quantum_Rényi_alpha_relative_entropy} 
			R_{\alpha}(\rho || \sigma) = \frac{1}{\alpha - 1} \log \tr(\rho^{\alpha}
			\sigma^{1-\alpha}),
		\end{equation}
		for $0 \leq \alpha < 1$ and $\alpha > 1$. Later another parameter $z$ was added in this definition. Given any operator $\rho$, the subspace spanned by the set of eigenvectors corresponding to its non zero eigenvalues is denoted by $\Supp(\rho )$. Given a density operator $\rho$, and a positive semi-definite operator  $\sigma$ with $\Supp(\rho) \subseteq \Supp(\sigma)$, the $\alpha$-$z$ R\'{e}nyi quantum relative entropy \cite{audenaert2015alpha} is defined by
		\begin{equation}\label{z relative entropies}
			D_{\alpha,z}(\rho\|\sigma) = \frac{1}{\alpha-1} \log\tr\left( \sigma^{\frac{1-\alpha}{2z}} \rho^{\frac{\alpha}{z}} \sigma^{\frac{1-\alpha}{2z}} \right)^z,
		\end{equation}
		where $\alpha \in \mathbb{R} - \{1\}$ and $z \in \mathbb{R}^{+}$. Also, $D_{1, z}(\rho\|\sigma) = \lim_{\alpha \to 1} D_{\alpha,z}(\rho\|\sigma)$, $D_{\alpha, 0}(\rho\|\sigma) = \lim_{z \to 0} D_{\alpha,z}(\rho\|\sigma)$. There are positive semi-definite operators $\rho$ and $\sigma$ such that $\tr \left( \sigma^{\frac{1-\alpha}{2z}} \rho^{\frac{\alpha}{z}} \sigma^{\frac{1-\alpha}{2z}} \right)^z= 0$. The natural logarithm is undefined in these cases. Therefore, the $\alpha$-$z$ Rényi quantum relative entropy $D_{\alpha,z}(\rho\|\sigma)$ is also undefined. It motivates us to develop an alternative definition. In this article, we proposed the $r$-deformed $\alpha$-$z$ Rényi quantum relative entropy which overcome this drawback. Here, we deform the natural logarithm in $D_{\alpha,z}(\rho\|\sigma)$ with $r$-logarithm, which is defined as follows:\\
		The idea of deformed logarithms \cite{kac2002quantum}, entropy and divergence \cite{cover1999elements} is an interesting and well-investigated topic in literature \cite{kaniadakis2004deformed, leditzky2016relative, tsallis2019beyond, tsallis2009nonadditive, namdari2019review, dutta2021two}, due to its significance in Mathematics, Physics, Communication Engineering, and in Machine Learning.
		
		\begin{definition}\label{r_logarithm_definiton}
			The $r$-logarithm is denoted by $\ln_{r}(x)$ and it is defined by
			\begin{center}
				$\ln_r(x) = \frac{x^r - 1}{r}$ ,\quad for\; $r \neq 0$ and $x\geq 0$.
			\end{center}
		\end{definition}
		Note that, $\underset{r \to 0}{\lim} \ln_{r}(x) 
		= \log(x)$ for $x\neq 0$. Therefore, the $r$-logarithm is a generalized version of the natural logarithm. Also, it is defined for $x \geq 0$ which assists us to calculate the value of relative entropy for $\tr \left( \sigma^{\frac{1-\alpha}{2z}} \rho^{\frac{\alpha}{z}} \sigma^{\frac{1-\alpha}{2z}} \right)^z= 0$. Now, we have the following definition of the $r$-deformed $\alpha$-$z$-R\'{e}nyi relative entropy.
		\begin{definition}\label{r-z}
			The $r$-deformed $\alpha$-$z$-R\'{e}nyi relative entropy for any two positive semi-definite operators $\rho$ and $\sigma$ with $\Supp(\rho) \subseteq \Supp(\sigma) $ is defined by
			\begin{equation*}
				\begin{split}
					D_{\alpha,z}^{(r)}(\rho || \sigma)
					&= \frac{1}{\alpha - 1} \ln_{r} \left(\tr \left( \sigma^{\frac{1-\alpha}{2z}} \rho^{\frac{\alpha}{z}} \sigma^{\frac{1-\alpha}{2z}} \right)^z \right) 
					=\frac{1}{r(\alpha-1)} \left[ \left(\tr \left( \sigma^{\frac{1-\alpha}{2z}} \rho^{\frac{\alpha}{z}} \sigma^{\frac{1-\alpha}{2z}} \right)^z \right)^{r} - 1 \right],
				\end{split}
			\end{equation*}
			where $\alpha$, $r \in \mathbb{R} $ and $ z \in \mathbb{R}^{+}$.
		\end{definition}
		
		Now putting $r=z=1$ in \autoref{r-z}, we have 
		\begin{equation}
			D_{\alpha,1}^{1}(\rho || \sigma) = \frac{1}{(\alpha-1)} \left[\tr \left( \sigma^{\frac{1-\alpha}{2}} \rho^{\alpha} \sigma^{\frac{1-\alpha}{2}} \right) - 1 \right] =\frac{1-\tr\left(\rho^\alpha \sigma^{1-\alpha} \right) }{1-\alpha}.
		\end{equation} 
		Assuming $\rho$ as a density operator that is $\tr\left( \rho \right)=1$ we have
		\begin{equation}
			D_{\alpha,1}^{1}(\rho || \sigma) = \frac{\tr\left(\rho \right)-\tr\left(\rho^\alpha \sigma^{1-\alpha}\right)  }{1-\alpha}  =\tr\left( \frac{\rho-\rho^\alpha \sigma^{1-\alpha}}{1-\alpha}\right).
		\end{equation} 
		This is the Tsallis relative entropy, which is 
		\begin{equation}\label{equality_of_Tsalish_bound}
			D_{\alpha}\left( \rho||\sigma\right) = D_{\alpha,1}^{1}(\rho || \sigma) = \tr\left( \frac{\rho-\rho^\alpha \sigma^{1-\alpha}}{1-\alpha}\right).
		\end{equation}
		
		
		As the $r$-logarithm plays a key role in this article, we investigate the properties of $r$-logarithm. We prove that the $r$-deformed $\alpha$-$z$-R\'{e}nyi relative entropy satisfies the properties of divergence, which includes non-negativity, unitary invariance, pseudo-additivity, joint convexity, and the data-processing inequality.  
		Our fundamental observations discussed in this article are as follows:
		\begin{enumerate}
			\item 
				In \autoref{g}, we prove that $D_{\alpha,z}^{(r)}(\rho || \sigma)$ is an upper bound of Tsallis relative entropy $D_{\alpha}(\rho || \sigma)$ \cite{furuichi2004fundamental}, that is 
				 \begin{equation}
				  D_{\alpha}(\rho || 	\sigma)=\tr\left(\frac{\rho - \rho^{\alpha}\sigma^{1-\alpha}}{1-\alpha} \right) \leq D_{\alpha,z}^{(r)}(\rho\|\sigma)~~\text{for~}  r \in \mathbb{R} \text{~and~} \alpha \in \mathbb{R}-\{1\}.
				 \end{equation}
		 	\item 
		 		For $0 \leq \alpha < 1$ and $q > (1-\alpha)$, we compare two upper bounds $B_{1}(\rho || \sigma) = 	D_{\alpha,z}^{(r)}(\rho || \sigma)$ and 
			 \begin{equation}\label{bound definition}
			 	B_{2}(\rho || \sigma) =-\tr \left[\rho \ln_{1-\alpha}\left(\rho^{-q/2}\sigma^q\rho^{-q/2}\right)^{1/q}\right]
			 \end{equation}
			 of the Tsallis relative entropy \cite{fujii2021matrix, furuichi2010matrix, seo2019matrix}, numerically. We observe that for non-commutative density operators $\rho$ and $\sigma$,
			 \begin{equation}
			 D_{\alpha}(\rho || \sigma)	\leq B_{1}(\rho || 	\sigma)\leq B_{2}(\rho || \sigma).
			 \end{equation}
		\end{enumerate}
	
	Moreover we observe that the $r$-logarithm is advantageous for this generalization. We have already mentioned that the $r$-logarithm is defined at $x=0$ where the natural logarithm is not. We also observe that the use of $r$-logarithm makes the process of deriving a number of results simpler. For example the proof of the inequality (\ref{j4}) needs the operator mejorization inequalities. On the other hand the $r$-logarithm does not need this technique. This article is distributed as follows: In Section 2, we discussed about the properties of $r$-logarithm. The properties of $r$-deformed $\alpha$-$z$-R\'enyi relative entropy are explored in section 3. In section 4, we established a new upper bound of Tsallis relative entropy and shown a detailed comparison of this with another upper bound of Tsallis relative entropy. Then we conclude this article.

	\section{$r$-logarithm and its properties}
		
		
		Now, we have the following properties of the $r$-logarithm, which we utilize in this article for deriving the characteristics of $r$-deformed $\alpha$-$z$-R\'{e}nyi relative entropy.
		
		\begin{lemma}\label{Properties of r-logarithm 1}
			Let $x$ and $y$ be two real numbers, then $\ln_{r}(xy) = \ln_{r}(x) + \ln_{r}(y) + r\ln_{r}(x)\ln_{r}(y)$.
		\end{lemma}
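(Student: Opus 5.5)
The identity follows by direct algebraic verification from \autoref{r_logarithm_definiton}, with $r \neq 0$ fixed throughout. The key observation is that the definition can be inverted: $x^r = 1 + r\ln_r(x)$, and likewise $y^r = 1 + r\ln_r(y)$. Since $(xy)^r = x^r y^r$, we can express $(xy)^r$ entirely in terms of $\ln_r(x)$ and $\ln_r(y)$. The identity then drops out.

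Concretely, the steps are as follows. First, write
\begin{equation*}
\ln_r(xy) = \frac{(xy)^r - 1}{r} = \frac{x^r y^r - 1}{r}.
\end{equation*}
Second, substitute $x^r = 1 + r\ln_r(x)$ and $y^r = 1 + r\ln_r(y)$, which gives
\begin{equation*}
x^r y^r = 1 + r\ln_r(x) + r\ln_r(y) + r^2 \ln_r(x)\ln_r(y).
\end{equation*}
Third, subtract $1$ and divide by $r$. This yields exactly $\ln_r(x) + \ln_r(y) + r\ln_r(x)\ln_r(y)$.

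Alternatively, one can expand the right-hand side directly:
\begin{equation*}
\frac{x^r-1}{r} + \frac{y^r-1}{r} + \frac{(x^r-1)(y^r-1)}{r}.
\end{equation*}
Expanding the product, the cross terms cancel against the first two fractions, leaving $\frac{x^r y^r - 1}{r}$.

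No genuine obstacle is expected. The only point needing care is the domain. \autoref{r_logarithm_definiton} takes $x \geq 0$, so I would restrict to $x, y \geq 0$; then $xy \geq 0$ as well, and $(xy)^r = x^r y^r$ holds for real exponents. For $r<0$ one should also take $x,y>0$ so that the powers are defined. If the statement is meant for all reals, I would note that the same computation is valid whenever the powers $x^r$, $y^r$, $(xy)^r$ are defined and satisfy $(xy)^r = x^r y^r$, for example when $r$ is an integer. Finally, I would remark that letting $r \to 0$ recovers the additivity $\log(xy) = \log x + \log y$, which is consistent with the note following \autoref{r_logarithm_definiton}.
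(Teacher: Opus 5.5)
Your proof is correct and matches the paper's, which expands the right-hand side from the definition and collects terms to obtain $\frac{(xy)^r-1}{r}$; this is exactly your alternative computation, and your substitution $x^r = 1 + r\ln_r(x)$ is the same algebra. Your restriction to $x,y \geq 0$ (and to $x,y>0$ when $r<0$) is a worthwhile precision that the paper's statement, phrased for arbitrary real numbers, omits.
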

		\begin{proof}
			Applying the definitions of $r$-logarithms, mentioned in \autoref{r_logarithm_definiton}, we get
			\begin{equation*}
				\ln_{r}(x)+\ln_{r}(y)+r\ln_{r}(x)\ln_{r}(x)
				=\frac{x^r-1}{r}+\frac{y^r-1}{r}+r\left( \frac{x^r-1}{r}\right) \left( \frac{y^r-1}{r}\right) 
				= \frac{(xy)^r-1}{r} = \ln_{r}(xy).
			\end{equation*}
			Hence, the proof.
		\end{proof}
		
		\begin{lemma}\label{Properties of r-logarithm 2}
			The $r$-logarithm satisfy the Jensen's inequality, which is
			\begin{equation}
				\ln_{r}\left( px+(1-p)y\right)\geq p\ln_{r}(x)+(1-p)\ln_{r}(y),~\text{for}~ 0\leq p \leq 1 ~\text{and}~ 0 < r \leq 1.
			\end{equation}
		\end{lemma}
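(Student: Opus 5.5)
The plan is to show that $\ln_r$ is concave on $[0,\infty)$ for $0<r\le 1$; the inequality in the statement is then exactly the defining inequality of concavity. Since the paper defines $\ln_r$ only for $x\ge 0$, we take $x,y\ge 0$ throughout (for negative reals $x^r$ need not be real).

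First we compute the second derivative on $(0,\infty)$:
\begin{equation*}
\frac{d}{dx}\ln_r(x)=x^{r-1},\qquad \frac{d^2}{dx^2}\ln_r(x)=(r-1)x^{r-2}.
\end{equation*}
For $0<r\le 1$ we have $r-1\le 0$ and $x^{r-2}>0$, so the second derivative is nonpositive and $\ln_r$ is concave on the open half-line. When $r=1$ the function is $x-1$, which is affine, and the statement holds with equality.

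Next we extend concavity to the closed half-line $[0,\infty)$. The function $x\mapsto \frac{x^r-1}{r}$ is continuous at $0$ for $r>0$, with value $-1/r$. A function that is continuous on $[0,\infty)$ and concave on $(0,\infty)$ is concave on $[0,\infty)$. To see this, take the concavity inequality for points $x_\epsilon=x+\epsilon$ and $y_\epsilon=y+\epsilon$, and let $\epsilon\to 0^+$.

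Alternatively, the case where one endpoint is $0$ can be checked directly. Say $y=0$. The claim becomes
\begin{equation*}
\frac{(px)^r-1}{r}\ge \frac{p x^r-1}{r},
\end{equation*}
which reduces to $p^r\ge p$. This holds because $0\le p\le 1$ and $0<r\le 1$.

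Finally we apply the concavity definition with weights $p$ and $1-p$, which gives exactly
\begin{equation*}
\ln_r(px+(1-p)y)\ge p\ln_r(x)+(1-p)\ln_r(y).
\end{equation*}
The only delicate point is the boundary $x=0$ or $y=0$, where $\ln_r$ is not differentiable for $r<1$. This is handled by the continuity argument or by the direct check $p^r\ge p$ above.
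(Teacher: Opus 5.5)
Your proof is correct and follows the same route as the paper, which simply notes that $\frac{d^2}{dx^2}\ln_r(x)=(r-1)x^{r-2}\le 0$ for $0<r\le 1$ and $x>0$ and concludes concavity. Your extra treatment of the endpoint $x=0$ or $y=0$ fills a small gap the paper leaves implicit; either your continuity argument or your direct check $p^r\ge p$ handles it.
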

		\begin{proof}
			Let $f(x)=\frac{x^r-1}{r}$. Then, $f''(x)=(r-1)x^{r-2} \leq 0$ when $0 < r \leq 1$ and $x > 0$. Therefore, $\frac{x^r-1}{r}$ is a concave function in this range of $r$.
		\end{proof}
			 
		\begin{lemma}\label{properties of r-logagithm 3}
			The $r$-logarithm is a monotone increasing function for $x \geq 0$.
		\end{lemma}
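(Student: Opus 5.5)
The plan is to prove monotonicity directly from \autoref{r_logarithm_definiton}. We fix $r \neq 0$ and write $f(x)=\ln_r(x)=\frac{x^r-1}{r}$ for $x\ge 0$. We show that $f$ is nondecreasing on $[0,\infty)$, strictly increasing on $(0,\infty)$, and, when $r>0$, strictly increasing on all of $[0,\infty)$. Two ingredients suffice: the sign of the derivative on the open half-line, and a separate look at the endpoint $x=0$.

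For $x>0$ we have $f'(x)=\frac{r x^{r-1}}{r}=x^{r-1}>0$, and this holds for every real $r\neq 0$. The factor $1/r$ in the definition is exactly what cancels the sign of $r$. So $f$ is strictly increasing on $(0,\infty)$ by the mean value theorem.

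One can also avoid calculus entirely. For $0<x<y$ we have $y/x>1$, so $(y/x)^r>1$ when $r>0$ and $(y/x)^r<1$ when $r<0$. In both cases $\frac{(y/x)^r-1}{r}>0$, which says $\ln_r(y/x)>0$. Now apply \autoref{Properties of r-logarithm 1} to the product $y = x\cdot (y/x)$:
\[
\ln_r(y)-\ln_r(x)=\ln_r(y/x)\bigl(1+r\ln_r(x)\bigr)=\ln_r(y/x)\,x^r>0 .
\]
This gives strict monotonicity on $(0,\infty)$ without differentiation.

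The only real obstacle is the endpoint $x=0$, where $x^r$ needs care. For $r>0$ we have $0^r=0$, so $f(0)=-1/r$. Moreover $f$ is continuous at $0$, and $f(x)>-1/r$ for $x>0$ because $x^r>0$. Hence monotonicity extends to $[0,\infty)$. For $r<0$, $x^r\to\infty$ as $x\to 0^+$, so $0^r$ is not a finite number; $f$ is then only meaningful on $(0,\infty)$ or takes the extended value $+\infty$ at $0$. I will state the lemma in that case on $(0,\infty)$, which is the setting in which it is used later, namely for traces of positive operators.
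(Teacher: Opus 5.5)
Your proof is correct and follows the paper's argument: the derivative $\frac{d}{dx}\ln_r(x)=x^{r-1}$ is positive on $(0,\infty)$, and the endpoint is handled by $\ln_r(0)=-\frac{1}{r}<\ln_r(x)$ for $x>0$ when $r>0$. Your calculus-free identity $\ln_r(y)-\ln_r(x)=x^r\ln_r(y/x)$ is a nice extra, but your side remark on $r<0$ has a small slip: there $\ln_r(x)=\frac{x^r-1}{r}\to-\infty$ (not $+\infty$) as $x\to 0^+$, so in the extended-real sense monotonicity would actually persist at $x=0$.
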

		\begin{proof}
			We have $\frac{d}{dx}\ln_{r}(x) = x^{r-1} \geq 0$, when $x > 0$. Therefore~$\ln_{r}(x)$ is a monotone increasing function for $x >0$. Also, $\ln_{r}(0)=-\frac{1}{r}< \frac{x^r-1}{r}=\ln_{r}(x)$ for all $x>0$ and $r>0$.
			
		\end{proof}

		\begin{lemma}\label{b}
			For $r<1$ and $ x \geq 0$, the $r$-logarithm $\ln_r(x) \leq x-1$.
		\end{lemma}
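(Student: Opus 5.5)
The plan is to compare $\ln_r(x)$ with its tangent line at $x=1$. Note that $\ln_r(1)=0$ and $\frac{d}{dx}\ln_r(x)\big|_{x=1}=1$, so the tangent line at $x=1$ is exactly $y=x-1$. The claim therefore reduces to showing that $\ln_r$ lies below this tangent on the whole domain, which is what concavity gives.

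First treat $x>0$. Set $f(x)=x-1-\ln_r(x)$ for $x>0$. Then $f(1)=0$ and $f'(x)=1-x^{r-1}$. Since $r<1$, the exponent $r-1$ is negative, so $x\mapsto x^{r-1}$ is strictly decreasing on $(0,\infty)$ and equals $1$ at $x=1$. Hence $f'(x)<0$ on $(0,1)$ and $f'(x)>0$ on $(1,\infty)$. So $f$ attains its global minimum $f(1)=0$, which gives $f\ge 0$, i.e.\ $\ln_r(x)\le x-1$ for all $x>0$. Equivalently, one may note that $\frac{d^2}{dx^2}\ln_r(x)=(r-1)x^{r-2}<0$ for every $r<1$ with $r\neq 0$. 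This is the computation in the proof of \autoref{Properties of r-logarithm 2}, which there was only stated for $0<r\le 1$ but holds verbatim for $r<0$. A concave function lies below its tangent lines. The case $r=0$, understood as the limit $\log x$, is the classical inequality $\log x\le x-1$.

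It remains to handle the endpoint $x=0$, which is the only delicate step, since the behaviour depends on the sign of $r$. For $0<r<1$ we have $\ln_r(0)=-\frac1r$, and $\frac1r>1$ gives $-\frac1r\le -1=0-1$. Alternatively, this follows by letting $x\to 0^+$ in the inequality for $x>0$, since $\ln_r$ is continuous at $0$ when $r>0$. For $r<0$ the quantity $0^r$ is not finite. With the natural convention $0^r=+\infty$, one gets $\ln_r(0)=\frac{+\infty-1}{r}=-\infty\le -1$, so the inequality holds trivially. I would state this convention explicitly, or else restrict to $x>0$ when $r<0$.
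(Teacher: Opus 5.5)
Your proof is correct and follows the paper's argument essentially verbatim: the same auxiliary function $x-1-\ln_r(x)$, the same sign analysis of $1-x^{r-1}$ showing a global minimum $0$ at $x=1$, and a separate check of $x=0$ via $\ln_r(0)=-1/r$. Your treatment of that endpoint is slightly more careful than the paper's, which simply asserts that $r$ must lie in $(0,1)$ there, whereas you state explicitly how the case $r<0$ is to be read.
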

		\begin{proof}
			Let $g(x)=(x-1)-\ln_{r}(x)$. The proof is done if $g(x)\geq 0$ for $x \geq 0$. Note that, $g'(x)=1-x^{r-1}$. We have $r - 1<0$. Therefore, we have the following cases:
			\begin{enumerate}[label=\textbf{Case \arabic*:}, leftmargin=*]
				\item 
				For $0 < x \leq 1$ we get $x^{r-1} \geq 1$, or $1 - x^{r - 1} \leq 0$, or $g'(x) \leq 0$.
				\item 
				For $x > 1$ we get $x^{r-1} < 1$, or $1 - x^{r - 1} > 0$, or $g'(x)>0$.
			\end{enumerate}
			Hence, $g$ is increasing for $x>1$ and decreasing for $0<x \leq 1$. Now, $g''(x) = (1 - r)x^{r - 2}$. Also, $ g'(x)=0$ and $g''(x)>0$ at $x=1$. Thus, $g$ attains its minima, which is $0$ at $x=1$. Therefore, $g(x) \geq 0$ for $x > 0$. For  $x=0,~g(0)= -1- \ln_{r}(0)= -1 + \frac{1}{r} > 0$, because for this case $r$ must be in $(0,1).$
		\end{proof}	
%

		\begin{lemma}\label{a}
			For $r\geq 1$ and $ x \geq 0 $, the $r$-logarithm $\ln_r(x) \geq x-1 $.
		\end{lemma}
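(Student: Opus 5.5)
I will mirror the argument of \autoref{b}, reversing the inequalities. Set $h(x)=\ln_r(x)-(x-1)=\frac{x^r-1}{r}-x+1$ for $x\geq 0$; the claim is that $h(x)\geq 0$. Since $r\geq 1>0$, the function $\ln_r$ and its derivative are defined on $[0,\infty)$, and no case analysis on the sign of $r$ is needed.

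\textbf{Sign of the derivative.} I will compute $h'(x)=x^{r-1}-1$. Since $r-1\geq 0$, we have $x^{r-1}\leq 1$ for $0\leq x\leq 1$ and $x^{r-1}\geq 1$ for $x\geq 1$. Hence $h$ is non-increasing on $[0,1]$ and non-decreasing on $[1,\infty)$. It therefore attains its global minimum on $[0,\infty)$ at $x=1$, where $h(1)=0$. This gives $h(x)\geq 0$, which is the claim.

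\textbf{Convexity alternative.} Equivalently, $h''(x)=(r-1)x^{r-2}\geq 0$ for $x>0$, so $h$ is convex. A convex function with $h'(1)=0$ has its minimum at $1$; geometrically, $x-1$ is the tangent line of $\ln_r$ at $x=1$.

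\textbf{Boundary cases.} The main care point, which is small, is the endpoint and degenerate cases.
\begin{itemize}
\item At $x=0$ directly: $h(0)=-\frac1r+1=\frac{r-1}{r}\geq 0$ because $r\geq1$.
\item For $r=1$: $\ln_1(x)=x-1$ identically, so equality holds everywhere.
\item For $1\leq r<2$: the exponent $r-2$ is negative, so $h''$ is unbounded near $0$. I will therefore rely on the monotonicity argument via $h'$ rather than on $h''$ at $x=0$.
\end{itemize}
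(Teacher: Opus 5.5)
Your proof is correct and follows the paper's argument essentially verbatim: the paper also studies $h(x)=\ln_r(x)-(x-1)$, uses the sign of $h'(x)=x^{r-1}-1$ on either side of $x=1$ together with $h(1)=0$, and checks $h(0)=1-\frac{1}{r}\geq 0$ separately. One small inaccuracy: for $r=1$ you have $h''\equiv 0$, so $h''$ is not unbounded near $0$; this changes nothing, since you already handle $r=1$ as the identity case $\ln_1(x)=x-1$.
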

		\begin{proof}
			Let $h(x)=\ln_{r}(x)-(x-1)$. To prove this Lemma it is sufficient to prove $h(x) \geq 0 $ for $ x \geq 0$. Differentiating $h(x)$ we get $h'(x)=x^{r-1}-1$. We assumed that $r - 1 \geq 0$. Now, we have the following cases:
			\begin{enumerate}[label=\textbf{Case \arabic*:}, leftmargin=*]
				\item 
				For $0 < x \leq 1$, we get $ x^{r-1} \leq 1 $, or $ x^{r-1}-1 \leq 0$, or $h'(x) \leq 0 $.
				\item 
				For $ x>1 $, we get $ x^{r-1} > 1 $, or $ x^{r-1}-1 > 0 $, or  $ h'(x) > 0 $.
				\item 
				For $ x=0 $, we have $ h(0)= ln_{r}(0)+1 = -\frac{1}{r}+1 \geq 0 $ for $ r \geq 1 $. 
			\end{enumerate}
			Therefore, $h$ is decreasing for $0 < x\leq 1$, increasing for $x>1$, and $ h(0) \geq 0$ for $ r \geq 1$.  Also, $h(1)=0$. This implies that  $h(x) \geq 0$ for $x\geq 0$ with $r\geq 1$.  
		\end{proof}

	\section{$r$-deformed $\alpha$-$z$-R\'enyi relative entropy and its properties} 
		
		We have defined the $r$-deformed $\alpha$-$z$-R\'enyi relative entropy $D_{\alpha,z}^{(r)}(\rho || \sigma)$ in \autoref{r-z}. In this section, we explain the characteristics of $D_{\alpha,z}^{(r)}(\rho || \sigma)$. We establish the non-negativity property of $D_{\alpha,z}^{(r)}(\rho || \sigma)$ in \autoref{t}, unitary invariance of $D_{\alpha,z}^{(r)}(\rho || \sigma)$ in \autoref{h}, pseudo-additivity of $D_{\alpha,z}^{(r)}(\rho || \sigma)$ in \autoref{p}, a generalized data-processing inequality for $D_{\alpha,z}^{(r)}(\rho || \sigma)$ in \autoref{q}, and the joint convexity property of $D_{\alpha,z}^{(r)}(\rho || \sigma)$ in \autoref{r}.
		\begin{lemma} 
			For $r \neq 0$ and $\alpha \to 1$ with $\rho$ as density operator, $D_{\alpha,z}^{(r)}$ is reduced to the quantum relative entropy $S(\rho||\sigma)$, which is defined by $S(\rho||\sigma) =\tr(\rho\log\rho)+\tr(\rho \log\sigma)$.
		\end{lemma}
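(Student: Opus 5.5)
The plan is to set
\[
f(\alpha)=\tr\left( \sigma^{\frac{1-\alpha}{2z}} \rho^{\frac{\alpha}{z}} \sigma^{\frac{1-\alpha}{2z}} \right)^z ,
\qquad\text{so that}\qquad
D_{\alpha,z}^{(r)}(\rho\|\sigma)=\frac{\ln_r f(\alpha)}{\alpha-1}.
\]
At $\alpha=1$ we have $f(1)=\tr\big(\rho^{1/z}\big)^z=\tr\rho=1$, since $\rho$ is a density operator. Hence $\ln_r f(1)=\frac{1^r-1}{r}=0$ for every $r\neq 0$, and the limit $\alpha\to1$ is a $0/0$ form.

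By L'H\^opital's rule, or directly by the definition of the derivative at $\alpha=1$, the limit equals
\[
\frac{d}{d\alpha}\ln_r f(\alpha)\Big|_{\alpha=1}=f(1)^{r-1}f'(1)=f'(1).
\]
So the $r$-dependence disappears and everything reduces to computing $f'(1)$. This mirrors the fact that $\ln_r$ and $\log$ agree to first order at $x=1$.

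To compute $f'(1)$, write
\[
A(\alpha)=\sigma^{\frac{1-\alpha}{2z}}\rho^{\frac{\alpha}{z}}\sigma^{\frac{1-\alpha}{2z}}
\]
and use the trace derivative formula $\frac{d}{d\alpha}\tr A(\alpha)^z=z\,\tr\big(A(\alpha)^{z-1}A'(\alpha)\big)$. At $\alpha=1$ we have $A(1)=\rho^{1/z}$ and
\[
A'(1)=-\frac{1}{2z}(\log\sigma)\rho^{1/z}+\frac1z\rho^{1/z}\log\rho-\frac{1}{2z}\rho^{1/z}\log\sigma .
\]
Multiplying by $z\rho^{(z-1)/z}$ and using cyclicity of the trace gives
\[
f'(1)=\tr(\rho\log\rho)-\tr(\rho\log\sigma).
\]
This is the Umegaki relative entropy. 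The statement's "$+$" should be read as "$-$", the standard convention, and I will note this.

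The main obstacle is rigour around supports. First, $\log\sigma$ and $\log\rho$ are only defined on $\Supp(\sigma)$ and $\Supp(\rho)$. Second, for $z<1$ the map $x\mapsto x^z$ is not differentiable at $0$, so the trace derivative formula needs care.

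I would handle this as follows. Since $\Supp(\rho)\subseteq\Supp(\sigma)$, restrict everything to $\Supp(\sigma)$, where $\sigma$ is invertible and the powers $\sigma^{(1-\alpha)/2z}$ are analytic in $\alpha$. The range of $A(\alpha)$ is $\sigma^{(1-\alpha)/2z}\Supp(\rho)$, a subspace of fixed dimension. On it the nonzero spectrum of $A(\alpha)$ stays bounded away from $0$ near $\alpha=1$. Therefore $\tr A(\alpha)^z$ can be written as $\tr g(A(\alpha))$, where $g$ is smooth, agrees with $x^z$ on a neighbourhood of that spectrum, and satisfies $g(0)=0$. The zero eigenvalue contributes nothing.

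The factors $\rho^{1/z}$ in $A'(1)$ then ensure that $\rho^{(z-1)/z}$ is only ever applied on $\Supp(\rho)$, with the convention $0^{s}=0$. This makes the final trace identity legitimate.
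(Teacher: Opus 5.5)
Your proposal is correct and follows the same route as the paper: L'H\^opital's rule (equivalently, the derivative of $\ln_r\circ f$ at $\alpha=1$), the factor $f(1)^{r-1}=1$ removing the $r$-dependence, and the trace derivative formula $z\tr(A^{z-1}A')$ evaluated at $\alpha=1$. Your version is more careful than the paper's, which writes $A'(\alpha)$ as if $\rho$ and $\sigma$ commuted and ignores supports and the non-differentiability of $x^z$ at $0$; your ordered computation of $A'(1)$ with cyclicity of the trace gives the same result $\tr(\rho\log\rho)-\tr(\rho\log\sigma)$ that the paper's own proof arrives at, which confirms that the ``$+$'' in the statement is a typo.
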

		 
		\begin{proof}
			This proof follows the L'H\^opital's Rule. The detailed calculation is as follows: 
			\begin{equation}
				\begin{split}
					\underset{\alpha\to1}{\lim}D_{\alpha,z}^{(r)}(\rho||\sigma)
					= & 	\underset{\alpha\to1}{\lim}\frac{1}{\alpha-1}\ln_{r}\left(\tr \left( \sigma^{\frac{1-\alpha}{2z}} \rho^{\frac{\alpha}{z}} \sigma^{\frac{1-\alpha}{2z}} \right)^z \right)
					\\
					= & 	\underset{\alpha\to1}{\lim}\frac{\left(\tr \left( \sigma^{\frac{1-\alpha}{2z}} \rho^{\frac{\alpha}{z}} \sigma^{\frac{1-\alpha}{2z}} \right)^z \right)^{r}-1}{r(\alpha-1)}\quad\left[ \frac{0}{0}\text{form}\right] 
					\\
					= & 	\underset{\alpha\to1}{\lim}\frac{{\frac{d}{d\alpha}}\left(\tr \left( \sigma^{\frac{1-\alpha}{2z}} \rho^{\frac{\alpha}{z}} \sigma^{\frac{1-\alpha}{2z}} \right)^z \right)^r-0}{\frac{d}{d\alpha}[r(\alpha-1)]}
					\\
					= & 	\underset{\alpha\to1}{\lim}\frac{r\left(\tr \left( \sigma^{\frac{1-\alpha}{2z}} \rho^{\frac{\alpha}{z}} \sigma^{\frac{1-\alpha}{2z}} \right)^z \right)^{r-1}\frac{d}{d\alpha}\left(\tr \left( \sigma^{\frac{1-\alpha}{2z}} \rho^{\frac{\alpha}{z}} \sigma^{\frac{1-\alpha}{2z}} \right)^z \right)}{r\frac{d}{d\alpha}(\alpha-1)}
					\\
					= & \underset{\alpha\to1}{\lim}\left(\tr \left( \sigma^{\frac{1-\alpha}{2z}} \rho^{\frac{\alpha}{z}} \sigma^{\frac{1-\alpha}{2z}} \right)^z \right)^{r-1}\tr\left(\frac{d}{d\alpha} \left( \sigma^{\frac{1-\alpha}{2z}} \rho^{\frac{\alpha}{z}} \sigma^{\frac{1-\alpha}{2z}} \right)^z \right)
					\\
					= & \underset{\alpha\to1}{\lim}\left(\tr \left( \sigma^{\frac{1-\alpha}{2z}} \rho^{\frac{\alpha}{z}} \sigma^{\frac{1-\alpha}{2z}} \right)^z \right)^{r-1}\tr\left[ z \left( \sigma^{\frac{1-\alpha}{2z}} \rho^{\frac{\alpha}{z}} \sigma^{\frac{1-\alpha}{2z}} \right)^{z-1}\frac{d}{d\alpha}\left( \sigma^{\frac{1-\alpha}{2z}} \rho^{\frac{\alpha}{z}} \sigma^{\frac{1-\alpha}{2z}} \right)\right]
					\\
					= & \underset{\alpha \to 	1}{\lim}\left(\tr \left( \sigma^{\frac{1-\alpha}{2z}} \rho^{\frac{\alpha}{z}} \sigma^{\frac{1-\alpha}{2z}} \right)^z \right)^{r-1} \\
					&  ~~~\tr\left[ z \left( 	\sigma^{\frac{1-\alpha}{2z}} \rho^{\frac{\alpha}{z}} \sigma^{\frac{1-\alpha}{2z}} \right)^{z-1}\left( \sigma^{\frac{1-\alpha}{2z}} \rho^{\frac{\alpha}{z}} \sigma^{\frac{1-\alpha}{2z}} \right)\left(-\frac{\log\sigma}{2z}+\frac{\log\rho}{z}-\frac{\log\sigma}{2z} \right) \right] 
					\\
					= & \underset{\alpha\to1}{\lim}\left(\tr \left( \sigma^{\frac{1-\alpha}{2z}} \rho^{\frac{\alpha}{z}} \sigma^{\frac{1-\alpha}{2z}} \right)^z \right)^{r-1}\tr\left[\left( \sigma^{\frac{1-\alpha}{2z}} \rho^{\frac{\alpha}{z}} \sigma^{\frac{1-\alpha}{2z}} \right)^z\left( \log\rho-\log\sigma\right) \right]
					\\
					= & 1^{r-1}\tr\left[ \rho\left( 	\log\rho-\log\sigma\right) \right]~~~~~~ [\text{since}\tr\rho=1]
					\\
					= & \tr{(\rho\log\rho-\rho\log\sigma)}.
				\end{split}
			\end{equation}
			Hence, the proof.
		\end{proof}
	
		Now we prove the non-negativity property of the $r$-deformed $\alpha$-$z$-R\'enyi relative entropy. Here, we apply the H\"{o}lder's inequality, which we mention below:
		\begin{lemma}
			\textbf{H\"{o}lder's inequality} \cite{albuquerque2017holder}: Let $(S, \Sigma, \mu)$ be a measure space and let $p, q \in [1, \infty]$ with $1/p + 1/q = 1$. Then for all measurable real or complex-valued functions $f$ and $g$ on $S$, 
			\begin{equation}
				\|fg\|_1 \leq \|f\|_p \|g\|_q.
			\end{equation}
		\end{lemma}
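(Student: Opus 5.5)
The statement is the classical H\"older inequality on a measure space, and I would prove it directly from Young's inequality for products. First I dispose of the degenerate cases.

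If $p=1,q=\infty$ (or symmetrically $p=\infty$, $q=1$), then $|f(s)g(s)|\le |f(s)|\,\|g\|_\infty$ for $\mu$-almost every $s$. Integrating gives $\|fg\|_1\le \|f\|_1\|g\|_\infty$ immediately.

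If $\|f\|_p=0$, then $f=0$ almost everywhere, so $fg=0$ almost everywhere and both sides vanish; the same holds if $\|g\|_q=0$. If either norm is $+\infty$ and the other is nonzero, the right-hand side is $+\infty$ and there is nothing to prove. So I may assume $1<p,q<\infty$ and $0<\|f\|_p,\|g\|_q<\infty$.

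The key tool is Young's inequality: for $a,b\ge 0$ and conjugate exponents $p,q$,
\begin{equation*}
ab\le \frac{a^p}{p}+\frac{b^q}{q}.
\end{equation*}
I would prove it from concavity of the logarithm, in the same spirit as the Jensen-type statement of \autoref{Properties of r-logarithm 2}. When $a,b>0$, concavity of $\log$ with weights $1/p$ and $1/q$ gives
\begin{equation*}
\log\left(\frac{a^p}{p}+\frac{b^q}{q}\right)\ge \frac1p\log a^p+\frac1q\log b^q=\log(ab),
\end{equation*}
and exponentiating yields the claim. When $a=0$ or $b=0$ the inequality is trivial.

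Next I normalize by setting $F=|f|/\|f\|_p$ and $G=|g|/\|g\|_q$, so that $\int F^p\,d\mu=1=\int G^q\,d\mu$. Applying Young pointwise, $F(s)G(s)\le F(s)^p/p+G(s)^q/q$. Integrating over $S$ then gives
\begin{equation*}
\int FG\,d\mu\le \frac1p+\frac1q=1.
\end{equation*}
Multiplying back by $\|f\|_p\|g\|_q$ yields $\|fg\|_1\le\|f\|_p\|g\|_q$. The complex-valued case needs no separate treatment, since only $|f|$ and $|g|$ enter the argument.

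The only step requiring real care is the bookkeeping of the degenerate cases (zero or infinite norms, and the endpoint exponents), because the normalization step divides by the norms. The analytic content is entirely Young's inequality.

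In the paper, H\"older is evidently applied to the eigenvalue (Schatten norm) version, which follows from the same argument applied to singular values together with von Neumann's trace inequality. For the statement as written, however, the measure-space proof above suffices.
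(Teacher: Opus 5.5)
Your proof is correct: the degenerate cases (endpoint exponents, zero or infinite norms, with $0\cdot\infty=0$) are handled properly. Young's inequality via concavity of $\log$ with weights $1/p,1/q$ is right, and the normalize--integrate--rescale step completes the argument. The paper gives no proof of this lemma at all; it only cites \cite{albuquerque2017holder}. So your write-up supplies the standard textbook proof that the paper takes for granted, rather than competing with an argument of its own.

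One caveat concerns your closing aside. It is not needed for the statement, but it matters for how the paper actually uses H\"older. In Corollary~\ref{corollary 1} the paper applies the Schatten version $\|AB\|_z\le\|A\|_p\|B\|_q$ with $1/p+1/q=1/z$ and general $z$, namely $p=z/\alpha$, $q=z/(1-\alpha)$.

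Von Neumann's trace inequality plus scalar H\"older on singular values only gives the case $z=1$, via $\|AB\|_1=\max_U|\tr(UAB)|\le\sum_i s_i(A)s_i(B)$.

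For general $z$ you need Horn's log-majorization $\prod_{i\le k}s_i(AB)\le\prod_{i\le k}s_i(A)s_i(B)$. Since $t\mapsto e^{zt}$ is convex and increasing, this yields $\sum_i s_i(AB)^z\le\sum_i s_i(A)^z s_i(B)^z$. Scalar H\"older with the conjugate exponents $p/z$ and $q/z$ then finishes the argument. So "the same argument with von Neumann's inequality" does not by itself cover the operator form the paper relies on.
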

		The H\"{o}lder's inequality can be generalized for the operators between the Hilbert spaces as follows:
		\begin{lemma}\label{Operator_Holder_Inequality} 
			Let the operators $ A \in \mathcal{L}(\mathcal{H}_2, \mathcal{H}_3)$, and $B \in \mathcal{L}(\mathcal{H}_1, \mathcal{H}_2)$ be defined between the Hilbert spaces $\mathcal{H}_1, \mathcal{H}_2,$  and $ \mathcal{H}_3 $ respectively. Then, for $ p, q, z \in [0, \infty]$ satisfying $\frac{1}{p} + \frac{1}{q} = \frac{1}{z}$, we have
			\begin{equation}\label{Holders inequality}
				\| A B \|_z \leq \| A \|_p \| B \|_q.
			\end{equation}
			Here, $\| A \|_p = \left(\tr(| A |^p) \right)^{1/p}$ where $| A |= \sqrt{\left({A}^\dagger A \right)}$ is the Schatten-$p$ norm.
		\end{lemma}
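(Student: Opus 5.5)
The statement is the operator H\"older inequality: for $A$ and $B$ with $\frac1p+\frac1q=\frac1z$, we want $\|AB\|_z\le\|A\|_p\|B\|_q$. My plan is to reduce it to a statement about singular values and then apply the scalar (sequence) H\"older inequality, i.e.\ the counting-measure case of the preceding lemma.

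\textbf{Step 1 (trivial cases).} The endpoints will be handled separately. If $p=\infty$, then $|AB|^2=B^\dagger A^\dagger A B\le \|A\|_\infty^2 B^\dagger B$, so $s_j(AB)\le \|A\|_\infty s_j(B)$ for every $j$, which gives $\|AB\|_q\le\|A\|_\infty\|B\|_q$; the case $q=\infty$ is symmetric. Degenerate values such as $p=0$ will be excluded by interpreting the constraint sensibly, so we may assume $0<p,q<\infty$ and hence $0<z<\infty$.

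\textbf{Step 2 (Horn's majorization).} Let $s_1\ge s_2\ge\cdots$ denote singular values. The key ingredient is the weak log-majorization
\begin{equation*}
\prod_{j=1}^k s_j(AB)\le\prod_{j=1}^k s_j(A)\,s_j(B),\qquad k=1,2,\dots
\end{equation*}
This follows from $s_1(AB)\le s_1(A)s_1(B)$, which is submultiplicativity of the operator norm, applied to the antisymmetric tensor powers $\wedge^k A$ and $\wedge^k B$. One uses $\|\wedge^k X\|_\infty=\prod_{j\le k}s_j(X)$ and the identity $\wedge^k(AB)=(\wedge^kA)(\wedge^kB)$. Since $t\mapsto e^{zt}$ is convex and increasing, weak log-majorization upgrades to weak majorization of the $z$-th powers:
\begin{equation*}
\sum_j s_j(AB)^z\le\sum_j s_j(A)^z s_j(B)^z.
\end{equation*}
I expect this step to be the main obstacle. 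Deriving the majorization carefully, including the case where some singular values vanish, needs the standard limiting argument or a perturbation $A+\epsilon I$, and so I would cite it from Bhatia's \emph{Matrix Analysis} rather than reprove it.

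\textbf{Step 3 (scalar H\"older).} Set $p'=p/z$ and $q'=q/z$, so that $\frac1{p'}+\frac1{q'}=1$ with $p',q'\ge1$. Apply the classical H\"older inequality from the preceding lemma with counting measure to the sequences $s_j(A)^z$ and $s_j(B)^z$:
\begin{equation*}
\sum_j s_j(A)^z s_j(B)^z\le\Big(\sum_j s_j(A)^p\Big)^{z/p}\Big(\sum_j s_j(B)^q\Big)^{z/q}.
\end{equation*}
Taking $z$-th roots, and recalling that $\|X\|_p^p=\tr|X|^p=\sum_j s_j(X)^p$, yields $\|AB\|_z\le\|A\|_p\|B\|_q$.

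For operators between different Hilbert spaces, the singular values are still defined through $|X|=\sqrt{X^\dagger X}$. If needed, one can pad the spaces to a common dimension with zero blocks; this does not change any of the norms involved.
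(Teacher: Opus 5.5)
The paper gives no proof of this lemma. It is quoted as a known fact from the literature on sandwiched R\'enyi divergences, so there is no argument of the paper's to compare yours with.

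Your proof is correct and is the standard one:
\begin{itemize}
\item Horn's inequality $\prod_{j\le k}s_j(AB)\le\prod_{j\le k}s_j(A)s_j(B)$, from submultiplicativity of the operator norm applied to $\wedge^k A$ and $\wedge^k B$.
\item The upgrade to $\sum_j s_j(AB)^z\le\sum_j s_j(A)^z s_j(B)^z$ through the convex increasing map $t\mapsto e^{zt}$.
\item The paper's scalar H\"older lemma with conjugate exponents $p/z,\,q/z\in(1,\infty)$.
\end{itemize}
Your endpoint cases in Step 1 are also correct, via $B^\dagger A^\dagger AB\le\|A\|_\infty^2B^\dagger B$ and Weyl monotonicity.

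Compared with the paper's bare citation, your route gives a self-contained justification built from the paper's own scalar H\"older lemma. It holds for every $0<z<\infty$. Since it never uses the triangle inequality, it also covers $z<1$, where $\|\cdot\|_z$ is only a quasi-norm.

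Three small remarks:
\begin{itemize}
\item Excluding $p=0$ is forced, not a convention. With $p,q,z\in[0,\infty]$ the statement is meaningless at $0$, and the correct range is $(0,\infty]$.
\item The zero-singular-value issue you flag in Step 2 needs no limiting argument. Let $\ell$ be the number of nonzero $s_j(AB)$. Horn's inequality at $k=\ell$ forces $s_j(A)s_j(B)>0$ for all $j\le\ell$. On these indices every logarithm is finite, and the convexity argument gives $\sum_{j\le\ell}s_j(AB)^z\le\sum_{j\le\ell}s_j(A)^zs_j(B)^z$. The left side is already the full sum, and the right side is at most the full sum.
\item Padding to a common dimension is harmless but unnecessary. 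Singular values and $\wedge^k$ make sense for maps between different spaces.
\end{itemize}
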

		
		\begin{corollary}\label{corollary 1}
			For any density operator $\rho$ and a positive semi-definite operator $\sigma$ with $0 \leq \tr(\sigma) \leq 1$, $0 \leq \alpha < 1 $ and $ z \geq max \{\alpha, 1-\alpha\} $, we have $ \tr \left( \sigma^{\frac{1-\alpha}{2z}} \rho^{\frac{\alpha}{z}} \sigma^{\frac{1-\alpha}{2z}} \right)^z \leq 1 $ . 
		\end{corollary}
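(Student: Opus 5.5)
The plan is to write the trace as a Schatten norm and apply the operator H\"older inequality (\autoref{Operator_Holder_Inequality}) with exponents matched to the trace conditions on $\rho$ and $\sigma$.

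\textbf{Rewriting the trace as a norm.} Set
\[
A = \rho^{\frac{\alpha}{2z}}\sigma^{\frac{1-\alpha}{2z}} .
\]
Then $A^\dagger A = \sigma^{\frac{1-\alpha}{2z}} \rho^{\frac{\alpha}{z}} \sigma^{\frac{1-\alpha}{2z}}$, so
\[
\tr\left( \sigma^{\frac{1-\alpha}{2z}} \rho^{\frac{\alpha}{z}} \sigma^{\frac{1-\alpha}{2z}} \right)^z = \tr |A|^{2z} = \|A\|_{2z}^{2z}.
\]
It therefore suffices to show $\|A\|_{2z} \leq 1$.

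\textbf{Applying H\"older.} Write $A = B C$ with $B = \rho^{\frac{\alpha}{2z}}$ and $C = \sigma^{\frac{1-\alpha}{2z}}$. Choose
\[
p = \frac{2z}{\alpha}, \qquad q = \frac{2z}{1-\alpha},
\]
so that $\frac{1}{p}+\frac{1}{q} = \frac{\alpha + 1-\alpha}{2z} = \frac{1}{2z}$. \autoref{Operator_Holder_Inequality} then gives $\|A\|_{2z} \leq \|B\|_p \|C\|_q$.

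\textbf{Bounding the two factors.} For the first factor,
\[
\|B\|_p = \left(\tr \rho^{\frac{\alpha}{2z}\cdot\frac{2z}{\alpha}}\right)^{\alpha/2z} = (\tr\rho)^{\alpha/2z} = 1 .
\]
For the second, $\|C\|_q = (\tr\sigma)^{(1-\alpha)/2z} \leq 1$, because $0\le \tr\sigma \leq 1$ and $(1-\alpha)/2z > 0$. Hence $\|A\|_{2z}\le 1$, and raising to the power $2z>0$ gives the claim.

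\textbf{Degenerate and edge cases.}
\begin{itemize}[leftmargin=*]
\item If $\alpha = 0$, then $p=\infty$ and $\|B\|_\infty = \|\rho^0\|_\infty = 1$, where $\rho^0$ denotes the support projection. The argument goes through unchanged.
\item The requirement $p,q \geq 1$ of the classical H\"older setting corresponds to $2z \geq \alpha$ and $2z \geq 1-\alpha$. This is implied by $z \geq \max\{\alpha,1-\alpha\}$, so the hypothesis is more than sufficient.
\item \autoref{Operator_Holder_Inequality} as stated allows all exponents in $[0,\infty]$, so even quasi-norm ranges pose no issue.
\end{itemize}

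The main obstacle is bookkeeping rather than substance. One must check that the exponents are positive so that the $p$- and $q$-norms reduce exactly to $\tr\rho$ and $\tr\sigma$. One must also confirm that the monotonicity $(\tr\sigma)^{t}\le 1$ for $t>0$ is used correctly. With $\alpha\in[0,1)$ and $z>0$ both exponents are nonnegative, so the argument closes.
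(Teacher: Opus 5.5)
Your proof is correct. It uses the same tool as the paper, the operator H\"older inequality, but with a different factorization, and the difference matters.

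The paper works with the non-symmetric product $X=\rho^{\alpha/z}\sigma^{(1-\alpha)/z}$, taking $p=z/\alpha$ and $q=z/(1-\alpha)$ at Schatten index $z$. This choice is where its hypothesis $z\ge\max\{\alpha,1-\alpha\}$ comes from, since it makes $p,q\ge 1$. The paper then writes $\|X\|_z^z$ as $\tr X^z$ and passes to the sandwiched trace. Because $X$ is not positive, that step is not an identity. The norm $\|X\|_z^z=\sum_i s_i(X)^z$ is built from singular values. The sandwiched trace instead equals $\sum_i\lambda_i(X)^z$, because the eigenvalues of $X$ coincide with those of $\sigma^{(1-\alpha)/2z}\rho^{\alpha/z}\sigma^{(1-\alpha)/2z}$. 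To make the paper's chain rigorous, one needs an extra ingredient such as Weyl's majorant theorem, $\sum_i\lambda_i(X)^z\le\sum_i s_i(X)^z$.

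Your symmetric splitting $A=\rho^{\alpha/2z}\sigma^{(1-\alpha)/2z}$ at index $2z$ makes $A^\dagger A$ exactly the sandwiched operator. The quantity to be bounded is then literally $\|A\|_{2z}^{2z}$, and no eigenvalue versus singular-value comparison is needed. Your route also shows that the restriction on $z$ is an artifact. As you note, $p,q\ge 1$ only requires $2z\ge\max\{\alpha,1-\alpha\}$. Moreover, the Schatten H\"older inequality holds for all positive exponents, so your argument gives $\tr\bigl(\sigma^{\frac{1-\alpha}{2z}}\rho^{\frac{\alpha}{z}}\sigma^{\frac{1-\alpha}{2z}}\bigr)^z\le(\tr\rho)^{\alpha}(\tr\sigma)^{1-\alpha}\le 1$ for every $z>0$. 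The paper's splitting buys nothing beyond matching the hypothesis as stated.
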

		\begin{proof}
			Put $p = \frac{z}{\alpha}$, $q = \frac{z}{1- \alpha}$, $A = \rho^{\frac{\alpha}{z}}$ and $B = \sigma^{\frac{1-\alpha}{z}}$ in \autoref{Operator_Holder_Inequality}. Now, for $0 \leq \alpha < 1 $ and $ z \geq max \{\alpha, 1-\alpha\} $, equation (\ref{Holders inequality}) suggests 
			\begin{equation*}
				\begin{split}
					&\|\rho^{\frac{\alpha}{z}}\sigma^{\frac{1-\alpha}{z}}\|_z \leq \|\rho^{\frac{\alpha}{z}}\|_{\frac{z}{\alpha}}\|\sigma^{\frac{1-\alpha}{z}}\|_{\frac{z}{1-\alpha}},\\
					\text{or}~
					& \left[ \tr\left(  \rho^{\frac{\alpha}{z}}\sigma^{\frac{1-\alpha}{z}}\right)^z \right]^{\frac{1}{z}} \leq \left[ \tr\left(\rho^\frac{\alpha}{z} \right)^\frac{z}{\alpha}  \right]^{\frac{\alpha}{z}} \left[ \tr\left( \sigma^\frac{1-\alpha}{z}\right)^\frac{z}{1-\alpha} \right]^{\frac{1-\alpha}{z}},\\
					\text{or}~
					& \left[ \tr\left(  \rho^{\frac{\alpha}{z}}\sigma^{\frac{1-\alpha}{z}}\right)^z \right]^{\frac{1}{z}} \leq \left[ \tr\left(\rho \right) \right]^{\frac{\alpha}{z}} \left[ \tr\left( \sigma \right) \right]^{\frac{1-\alpha}{z}},\\
					\text{or}~
					& \left[ \tr\left(  \rho^{\frac{\alpha}{z}}\sigma^{\frac{1-\alpha}{z}}\right)^z \right]^{\frac{1}{z}} \leq 1,\\
					\text{or}~
					&\tr\left(\rho^{\frac{\alpha}{z}}\sigma^{\frac{1-\alpha}{z}}\right)^z \leq 1,\\
					\text{or}~
					& \tr \left( \sigma^{\frac{1-\alpha}{2z}} \rho^{\frac{\alpha}{z}} \sigma^{\frac{1-\alpha}{2z}} \right)^z \leq 1.
				\end{split}
			\end{equation*}
			Hence, the proof. 
		\end{proof}
		
		\begin{lemma}\label{t}
			\textbf{Non-negativity:} For any density operator $\rho$ and a positive semi-definite operator $\sigma$ with $0 \leq \tr(\sigma) \leq 1$, the $r$-deformed $\alpha$-$z$-R\'enyi relative entropy $D_{\alpha,z}^{(r)}(\rho || \sigma) \geq 0 $. 
		\end{lemma}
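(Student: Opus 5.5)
The plan is to reduce non-negativity to a sign comparison between the prefactor $\frac{1}{r(\alpha-1)}$ and the bracket $X^r-1$, where
\[
X=\tr\left( \sigma^{\frac{1-\alpha}{2z}} \rho^{\frac{\alpha}{z}} \sigma^{\frac{1-\alpha}{2z}} \right)^z \geq 0 .
\]
By \autoref{r-z} we have $D_{\alpha,z}^{(r)}(\rho\|\sigma)=\frac{1}{\alpha-1}\ln_r(X)$, so it suffices to control where $X$ lies relative to $1$ and to use the monotonicity of $\ln_r$ from \autoref{properties of r-logagithm 3}, together with $\ln_r(1)=0$.

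\textbf{Case $0\le\alpha<1$.} The statement is implicitly restricted to the parameter range of \autoref{corollary 1}, and I will make this explicit: $0\le \alpha<1$ and $z\ge\max\{\alpha,1-\alpha\}$. Under these hypotheses, \autoref{corollary 1} gives $X\le 1$.

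\begin{itemize}
\item For $r>0$, \autoref{properties of r-logagithm 3} gives $\ln_r(X)\le\ln_r(1)=0$. Since $\frac{1}{\alpha-1}<0$, the product is $\ge 0$.
\item For $r<0$, the function $x\mapsto \frac{x^r-1}{r}$ is still increasing on $x>0$, because its derivative is $x^{r-1}>0$. So the same sign argument works, provided $X>0$. (If $X=0$ and $r<0$, the quantity is $+\infty$, which is trivially nonnegative, or excluded by convention.)
\end{itemize}

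\textbf{Case $\alpha>1$.} Here I would need the reverse bound $X\ge 1$. Then $\ln_r(X)\ge 0$ and $\frac{1}{\alpha-1}>0$, which again gives non-negativity. The reverse bound should follow from a reverse Hölder-type argument: for $\alpha>1$ the exponent $\frac{1-\alpha}{z}$ on $\sigma$ is negative. Writing $\rho^{\frac{\alpha}{z}}=\sigma^{\frac{\alpha-1}{2z}}\,M\,\sigma^{\frac{\alpha-1}{2z}}$ with $M=\sigma^{\frac{1-\alpha}{2z}}\rho^{\frac{\alpha}{z}}\sigma^{\frac{1-\alpha}{2z}}$, \autoref{Operator_Holder_Inequality} with exponents $\frac{1}{p}+\frac{1}{q}=\frac{\alpha}{z}\cdot\frac{z}{\alpha}$ suitably chosen gives
\[
1=\tr\rho\le \|M\|_{z}^{\,z/\alpha}\,(\tr\sigma)^{(\alpha-1)/\alpha}\le X^{1/\alpha},
\]
using $\tr\sigma\le 1$. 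This yields $X\ge1$.

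\textbf{Main obstacle.} I expect the hard step to be this $\alpha>1$ Hölder manipulation, in particular choosing the exponents correctly and justifying the use of the support condition $\Supp(\rho)\subseteq\Supp(\sigma)$ so that negative powers of $\sigma$ make sense on the support. If that manipulation does not close cleanly, the fallback is to state the lemma only in the range $0\le\alpha<1$, $z\ge\max\{\alpha,1-\alpha\}$, where \autoref{corollary 1} makes the argument immediate.
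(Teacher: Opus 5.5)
For $0\le\alpha<1$ your argument is the paper's proof. \autoref{corollary 1} gives $0\le X\le 1$, and monotonicity of $\ln_r$ with $\ln_r(1)=0$ gives $\ln_r(X)\le 0$. The negative prefactor $\frac{1}{\alpha-1}$ then yields $D^{(r)}_{\alpha,z}\ge 0$. Your separate handling of $r<0$, where $\ln_r(0)$ is not finite, is a care point the paper skips.

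The real difference is $\alpha>1$. The paper's proof never treats this case, even though the statement puts no restriction on $\alpha$. Your reverse-H\"older step is correct and in fact works for every $z>0$. However, the exponent bookkeeping you wrote ($\frac{\alpha}{z}\cdot\frac{z}{\alpha}$, which is just $1$) is not the right one. What you need is the three-factor H\"older inequality with $\frac{\alpha}{z}=\frac{\alpha-1}{2z}+\frac{1}{z}+\frac{\alpha-1}{2z}$:
\[
1=\|\rho^{\alpha/z}\|_{z/\alpha}=\bigl\|\sigma^{\frac{\alpha-1}{2z}}M\,\sigma^{\frac{\alpha-1}{2z}}\bigr\|_{z/\alpha}\le\bigl\|\sigma^{\frac{\alpha-1}{2z}}\bigr\|_{\frac{2z}{\alpha-1}}^{2}\,\|M\|_{z}=(\tr\sigma)^{\frac{\alpha-1}{z}}\,X^{\frac{1}{z}}.
\]
Hence $X\ge(\tr\sigma)^{1-\alpha}\ge 1$, and raising the inequality to the power $z/\alpha$ gives exactly your display.

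To make this rigorous, state two things. First, the factorization of $\rho^{\alpha/z}$ uses generalized inverses and $\Supp(\rho)\subseteq\Supp(\sigma)$, so that the support projection of $\sigma$ fixes $\rho^{\alpha/z}$; this also forces $\tr\sigma>0$. Second, for $z<\alpha$ the index $z/\alpha$ is below $1$, so you need H\"older for Schatten quasi-norms. That version is valid, via Horn's log-majorization of singular values, and it is the range \autoref{Operator_Holder_Inequality} is stated for.

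The same quasi-norm H\"older also lets you drop the hypothesis $z\ge\max\{\alpha,1-\alpha\}$ in your first case:
\[
X=\bigl\|\rho^{\frac{\alpha}{2z}}\sigma^{\frac{1-\alpha}{2z}}\bigr\|_{2z}^{2z}\le(\tr\rho)^{\alpha}(\tr\sigma)^{1-\alpha}\le 1 \quad\text{for all } z>0.
\]

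Your restriction to $\alpha\ge 0$ is necessary, not cosmetic. Take commuting $\rho=\mathrm{diag}(1/2,1/2)$, $\sigma=\mathrm{diag}(0.9,0.1)$ and $\alpha=-1$. Then $X=0.9^2/0.5+0.1^2/0.5=1.64>1$, so $D^{(r)}_{\alpha,z}(\rho\|\sigma)=-\tfrac12\ln_r(1.64)<0$ for every $r\neq 0$. The lemma as literally stated therefore fails for $\alpha<0$.
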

		\begin{proof}
			Since, $\rho$ and $\sigma$ are positive semi definite operators, $\tr \left( \sigma^{\frac{1-\alpha}{2z}} \rho^{\frac{\alpha}{z}} \sigma^{\frac{1-\alpha}{2z}} \right)^z \geq 0$. Using \autoref{corollary 1}, for $\alpha\in[0,1)$ and max $ \{\alpha, 1-\alpha \}\leq z$ we have $0 \leq \tr \left( \sigma^{\frac{1-\alpha}{2z}} \rho^{\frac{\alpha}{z}} \sigma^{\frac{1-\alpha}{2z}} \right)^z \leq 1 $. From \autoref{properties of r-logagithm 3}, ~$\ln_{r}(x)$ is a monotone increasing function with $\ln_{r}(1)=0$, that is, $\ln_{r}(x)\leq 0$, for $x\in[0,1]$. Hence,
			\begin{equation}{\label{tk}}
				\begin{split}
					&\ln_{r}\left(\tr \left( 	\sigma^{\frac{1-\alpha}{2z}} \rho^{\frac{\alpha}{z}} \sigma^{\frac{1-\alpha}{2z}} \right)^z \right) \leq 0,\\
					\text{or }
					& \frac{1}{\alpha-1}\ln_{r}\left(\tr 	\left( \sigma^{\frac{1-\alpha}{2z}} \rho^{\frac{\alpha}{z}} \sigma^{\frac{1-\alpha}{2z}} \right)^z \right) \geq 0  ~~[\text{since~} 0 \leq \alpha<1] ,\\
					\text{or }
					& 	D_{\alpha,z}^{(r)}(\rho || \sigma) 	\geq 0 .
				\end{split}
			\end{equation}
			Hence, the proof.
		\end{proof} 
		
		\begin{lemma}
			\textbf{Order axiom:}
			Let $\rho$ be a density operator and $\sigma$ be a positive semi-definite operator. Then, for $|1-\alpha| \leq z$, we have  
			$$ D_{\alpha , z}^{(r)}(\rho || \sigma) \geq 0 \text{~when~} \rho \geq \sigma; \text{~and~} D_{\alpha , z}^{(r)}(\rho || \sigma) \leq 0 \text{~when~} \rho \leq \sigma. $$
		\end{lemma}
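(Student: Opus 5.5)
We would prove the Order axiom by reducing it to a comparison of the trace quantity
$$Q_{\alpha,z}(\rho\|\sigma) := \tr \left( \sigma^{\frac{1-\alpha}{2z}} \rho^{\frac{\alpha}{z}} \sigma^{\frac{1-\alpha}{2z}} \right)^z$$
with $1$. Since $\ln_r$ is monotone increasing with $\ln_r(1)=0$ (\autoref{properties of r-logagithm 3}), we have $D_{\alpha,z}^{(r)}(\rho\|\sigma) = \frac{1}{\alpha-1}\ln_r(Q_{\alpha,z})$. So the sign of $D_{\alpha,z}^{(r)}$ equals the sign of $(Q_{\alpha,z}-1)/(\alpha-1)$, independently of $r$.

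It therefore suffices to show two things. First, when $\rho\geq\sigma$, $Q_{\alpha,z}\leq 1$ for $\alpha<1$ and $Q_{\alpha,z}\geq 1$ for $\alpha>1$. Second, when $\rho\leq\sigma$, the reverse inequalities hold. Both cases are treated by the same argument.

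The key step is a monotonicity of $Q_{\alpha,z}$ in its second argument. The exponent on $\sigma$ is $p:=\frac{1-\alpha}{z}$, and the hypothesis $|1-\alpha|\leq z$ says exactly $|p|\leq 1$. By the L\"owner--Heinz theorem, $t\mapsto t^{p}$ is operator monotone for $0\le p\le 1$ and operator anti-monotone for $-1\le p<0$.

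Suppose $\rho\geq\sigma$ and $\alpha<1$, so $p\in(0,1]$. Then $\sigma^{p}\leq \rho^{p}$. We rewrite $Q_{\alpha,z}$ in the unitarily equivalent form
$$Q_{\alpha,z}=\tr\left(\rho^{\frac{\alpha}{2z}}\sigma^{p}\rho^{\frac{\alpha}{2z}}\right)^z,$$
using that $XX^\dagger$ and $X^\dagger X$ have the same nonzero spectrum. Conjugation preserves order, so
$$\rho^{\frac{\alpha}{2z}}\sigma^{p}\rho^{\frac{\alpha}{2z}}\leq \rho^{\frac{\alpha}{2z}}\rho^{p}\rho^{\frac{\alpha}{2z}}=\rho^{1/z}.$$
The map $A\mapsto \tr A^z$ is monotone on positive operators for every $z>0$, since $A\le B$ implies $\lambda_j(A)\le\lambda_j(B)$ by Weyl's inequality. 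Hence $Q_{\alpha,z}(\rho\|\sigma)\leq \tr\rho=1$. Combined with $\alpha-1<0$, this gives $D^{(r)}_{\alpha,z}\geq 0$.

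For $\alpha>1$ we have $p\in[-1,0)$, so $\sigma^{p}\geq\rho^{p}$ on the support. The same chain then gives $Q_{\alpha,z}\geq 1$, and dividing by $\alpha-1>0$ again yields $D^{(r)}_{\alpha,z}\geq 0$. The case $\rho\leq\sigma$ is identical with all inequalities reversed, giving $D^{(r)}_{\alpha,z}\leq 0$.

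The main obstacle is the support and invertibility issue when $p<0$. Here $\sigma^{p}$ is understood on $\Supp(\sigma)$, and anti-monotonicity requires $\rho$ and $\sigma$ to be handled on a common support. When $\rho\geq\sigma$ with $\Supp(\rho)\subseteq\Supp(\sigma)$, the supports coincide. When $\rho\le\sigma$, the supports may differ, and the ordering $\rho\leq\sigma$ alone does not give an operator inequality between $\rho^{p}$ and $\sigma^{p}$ on all of $\Supp(\sigma)$. To handle this we would first prove the statement for invertible $\rho,\sigma$. We would then pass to the limit by replacing $\rho,\sigma$ with $\rho+\epsilon I,\sigma+\epsilon I$, which preserves the order, renormalizing as needed, and using continuity of $Q_{\alpha,z}$.

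A second point to verify is the edge case $\alpha=1$. There the result follows from the limit lemma for $S(\rho\|\sigma)$, together with operator monotonicity of $\log$.
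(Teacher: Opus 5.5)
Your proposal is correct and follows essentially the same route as the paper (L\"owner--Heinz monotonicity or anti-monotonicity of $t\mapsto t^{(1-\alpha)/z}$ to compare $Q_{\alpha,z}$ with $\tr\rho=1$, then monotonicity of $\ln_r$ and the sign of $\alpha-1$); your rewriting $Q_{\alpha,z}=\tr\bigl(\rho^{\frac{\alpha}{2z}}\sigma^{p}\rho^{\frac{\alpha}{2z}}\bigr)^z$ together with Weyl monotonicity justifies the trace comparison that the paper only asserts. The one step to make explicit is continuity in your $\epsilon$-regularization, which is not automatic for $p<0$ but holds here because $\rho\le\sigma$ forces $\ker\sigma\subseteq\ker\rho$, so the regularized quantity splits as $Q_{\alpha,z}(\rho_1+\epsilon,\sigma_1+\epsilon)+\epsilon\dim\ker\sigma$, where $\sigma_1$ is the invertible restriction of $\sigma$ to $\Supp(\sigma)$.
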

		
		\begin{proof}
			Due to $|1-\alpha| \leq z$, we consider the following two cases for $-z < (1 - \alpha)$ and $(1 - \alpha) \geq z$.  
			\begin{enumerate}[label=\textbf{Case \arabic*:}, leftmargin=*]
				\item 
				For $ \alpha < 1$, $(1-\alpha) \leq z$ and $\rho \geq \sigma $ using operator monotonicity, we have
				\begin{equation}
					\begin{split}
						&\tr\left(\rho^\frac{\alpha}{z} \sigma^\frac{1-\alpha}{z} \right)^z \leq  \tr\left(\rho^\frac{\alpha}{z} \rho^\frac{1-\alpha}{z} \right)^z, \\
						\text{or}~
						&\tr\left(\sigma^\frac{1-\alpha}{2z}\rho^\frac{\alpha}{z}\sigma^\frac{1-\alpha}{2z}\right)^z \leq \tr\left(\rho^\frac{1-\alpha}{2z}\rho^\frac{\alpha}{z}\rho^\frac{1-\alpha}{2z}\right)^z, \\
						\text{or}~
						&\ln_{r}\left( \tr\left(\sigma^\frac{1-\alpha}{2z}\rho^\frac{\alpha}{z}\sigma^\frac{1-\alpha}{2z}\right)^z \right) \leq \ln_{r}\left( \tr\left(\rho^\frac{1-\alpha}{2z}\rho^\frac{\alpha}{z}\rho^\frac{1-\alpha}{2z}\right)^z \right)~\text{[using \autoref{Properties of r-logarithm 2}]}\\
						\text{or}~
						&\ln_{r}\left( \tr\left(\sigma^\frac{1-\alpha}{2z}\rho^\frac{\alpha}{z}\sigma^\frac{1-\alpha}{2z}\right)^z \right) \leq 0 ~\text{[since $ \tr(\rho)=1 $]},\\
						\text{or}~
						&\frac{1}{\alpha-1} \ln_{r}\left( \tr\left(\sigma^\frac{1-\alpha}{2z}\rho^\frac{\alpha}{z}\sigma^\frac{1-\alpha}{2z}\right)^z \right) \geq 0 \\
						\text{or}~
						&D_{\alpha , z}^{(r)}(\rho || \sigma) \geq 0.
					\end{split}
				\end{equation}
				Using the same process we will get $D_{\alpha , z}^{(r)}(\rho || \sigma) \leq 0$ for $\rho \leq \sigma $.
				\item 
				For $\alpha > 1$, $ z> (\alpha-1)$ and $\rho \geq \sigma $, we have 
				\begin{equation}
					\rho^\frac{\alpha-1}{z} \geq \sigma^\frac{\alpha-1}{z} \text{~that is~}  \sigma^\frac{1-\alpha}{z} \geq \rho^\frac{1-\alpha}{z}.
				\end{equation}
				Therefore,
				\begin{equation}
					\begin{split}
						&\tr\left(\rho^\frac{\alpha}{z} \sigma^\frac{1-\alpha}{z} \right)^z \geq  \tr\left(\rho^\frac{\alpha}{z} \rho^\frac{1-\alpha}{z} \right)^z, \\
						\text{or}~
						&\tr\left(\sigma^\frac{1-\alpha}{2z}\rho^\frac{\alpha}{z}\sigma^\frac{1-\alpha}{2z}\right)^z \geq \tr\left(\rho^\frac{1-\alpha}{2z}\rho^\frac{\alpha}{z}\rho^\frac{1-\alpha}{2z}\right)^z, \\
						\text{or}~
						&\ln_{r}\left( \tr\left(\sigma^\frac{1-\alpha}{2z}\rho^\frac{\alpha}{z}\sigma^\frac{1-\alpha}{2z}\right)^z \right) \geq \ln_{r}\left( \tr\left(\rho^\frac{1-\alpha}{2z}\rho^\frac{\alpha}{z}\rho^\frac{1-\alpha}{2z}\right)^z \right)~[\text{using \autoref{Properties of r-logarithm 2}}]\\
						\text{or}~
						&\ln_{r}\left( \tr\left(\sigma^\frac{1-\alpha}{2z}\rho^\frac{\alpha}{z}\sigma^\frac{1-\alpha}{2z}\right)^z \right) \geq 0 ~[\text{since $ \tr(\rho)=1 $}],\\
						\text{or}~
						&\frac{1}{\alpha-1} \ln_{r}\left( \tr\left(\sigma^\frac{1-\alpha}{2z}\rho^\frac{\alpha}{z}\sigma^\frac{1-\alpha}{2z}\right)^z \right) \geq 0 \\
						\text{or}~
						&D_{\alpha , z}^{(r)}(\rho || \sigma) \geq 0.
					\end{split}
				\end{equation}
				Similarly, for $\rho \leq \sigma $ we get $ D_{\alpha , z}^{(r)}(\rho || \sigma) \leq 0$.
			\end{enumerate}
			Combining we get the proof.
		\end{proof}

	\begin{lemma}\label{h}
		\textbf{Unitary invariance:} For any unitary operator $U$, we have $D_{\alpha,z}^{(r)}\left( U \rho U^{\dagger} || U \sigma U^{\dagger}\right)  = D_{\alpha,z}^{(r)}\left( \rho || \sigma\right) $.
	\end{lemma}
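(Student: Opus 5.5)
The plan is to show that the trace quantity inside the $r$-logarithm is unchanged under unitary conjugation; the claim then follows immediately from \autoref{r-z}.

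\textbf{Step 1: functional calculus commutes with conjugation.} For any positive semi-definite $A$ with spectral decomposition $A=\sum_i \lambda_i P_i$, the operator $UAU^{\dagger}=\sum_i\lambda_i UP_iU^{\dagger}$ is again a spectral decomposition, because the $UP_iU^\dagger$ are orthogonal projections summing to the identity. Hence for every real power $t$ (with the convention that negative powers are taken on the support),
\begin{equation*}
(UAU^\dagger)^t=UA^tU^\dagger .
\end{equation*}
Also $\Supp(U\rho U^\dagger)=U\Supp(\rho)\subseteq U\Supp(\sigma)=\Supp(U\sigma U^\dagger)$, so the left-hand side of the claimed identity is well defined whenever the right-hand side is.

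\textbf{Step 2: the sandwiched operator.} Applying Step 1 with $t=\frac{1-\alpha}{2z}$ and $t=\frac{\alpha}{z}$, and using $U^\dagger U=I$, we get
\begin{equation*}
(U\sigma U^\dagger)^{\frac{1-\alpha}{2z}}(U\rho U^\dagger)^{\frac{\alpha}{z}}(U\sigma U^\dagger)^{\frac{1-\alpha}{2z}}
=U\,\sigma^{\frac{1-\alpha}{2z}}\rho^{\frac{\alpha}{z}}\sigma^{\frac{1-\alpha}{2z}}\,U^\dagger .
\end{equation*}
The operator $\sigma^{\frac{1-\alpha}{2z}}\rho^{\frac{\alpha}{z}}\sigma^{\frac{1-\alpha}{2z}}$ is positive semi-definite. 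Applying Step 1 once more with exponent $z$ shows that its $z$-th power is conjugated by $U$ in the same way.

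\textbf{Step 3: trace and conclusion.} Cyclicity of the trace gives $\tr(UXU^\dagger)=\tr(X)$, so the arguments of $\ln_r$ coincide. Applying $\frac{1}{\alpha-1}\ln_r(\cdot)$ to both sides yields the equality for all admissible $\alpha$, $z$, $r$.

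No step is a real obstacle. The only point needing care is justifying $(UAU^\dagger)^t=UA^tU^\dagger$ for non-integer and negative exponents. This requires the spectral-decomposition argument in Step 1 rather than polynomial identities, together with consistent treatment of zero eigenvalues on the kernel.
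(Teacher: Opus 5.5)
Your proof is correct and follows essentially the same route as the paper: both use $(UAU^{\dagger})^t = UA^tU^{\dagger}$ via diagonalization, cancel $U^{\dagger}U=I$ inside the sandwiched product, and then use invariance of the trace under unitary conjugation. Your version is somewhat more careful than the paper's. You justify the identity for non-integer and negative exponents, check the support condition, and make explicit both the second application of the rule to the outer $z$-th power and the cyclicity step, which the paper leaves implicit.
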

	\begin{proof}
		As $\rho$ and $\sigma$ are positive  semi-definite operators, they are diagonalizable. Therefore, $(U \sigma U^{\dagger})^k = \left( U \sigma^k U^{\dagger}\right) $. Applying this relation, we get
		\begin{equation}\label{h2}
			\begin{split}
				D_{\alpha,z}^{(r)}(U \rho U^{\dagger} || U \sigma U^{\dagger}) 
				&=\frac{1}{\alpha - 1}\ln_{r}\tr \left[ \left( (U \sigma U^{\dagger})^{\frac{1 - \alpha}{2z}} (U \rho U^{\dagger})^{\frac{\alpha}{z}} (U \sigma U^{\dagger})^{\frac{1 - \alpha}{2z}} \right)^z \right]
				\\
				&=\frac{1}{\alpha - 1}\ln_{r}\tr \left[
				\left( U \sigma^{\frac{1 - \alpha}{2z}} U^\dagger \cdot U \rho^{\frac{\alpha}{z}} U^\dagger \cdot U \sigma^{\frac{1 - \alpha}{2z}} U^\dagger \right)^z \right] 
				\\
				&=\frac{1}{\alpha - 1}\ln_{r}\tr \left[ \left( U \left( \sigma^{\frac{1 - \alpha}{2z}} \rho^{\frac{\alpha}{z}} \sigma^{\frac{1 - \alpha}{2z}} \right) U^\dagger \right)^z \right]~~~\left[\text{since}~U U^\dagger = U^\dagger U = I \right]  
				\\
				&= \frac{1}{\alpha - 1}\ln_{r}\tr \left[ \left( \sigma^{\frac{1 - \alpha}{2z}} \rho^{\frac{\alpha}{z}} \sigma^{\frac{1 - \alpha}{2z}} \right)^z\right] =D_{\alpha,z}^{(r)}(\rho || \sigma).
			\end{split}
		\end{equation}
		Hence, the result.
	\end{proof}
	
	\begin{lemma}\label{h4}
		For any positive real number p and two positive semi-definite operators A, B we have $(A\otimes B)^p=A^p\otimes B^p $.
	\end{lemma}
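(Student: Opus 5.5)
The plan is to work with spectral decompositions, since for a positive semi-definite operator the power $A^p$ with $p>0$ is defined through the spectral theorem. Write
\begin{equation*}
A=\sum_i \lambda_i P_i, \qquad B=\sum_j \mu_j Q_j,
\end{equation*}
where $\lambda_i,\mu_j\geq 0$ are the distinct eigenvalues. The $P_i$ are mutually orthogonal projections with $\sum_i P_i=I$, and the $Q_j$ are the analogous projections for $B$. By definition, $A^p=\sum_i \lambda_i^p P_i$ and $B^p=\sum_j \mu_j^p Q_j$. For $\lambda=0$ we use the convention $0^p=0$, which is legitimate because $p>0$.

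Next I would compute $A\otimes B$ using bilinearity of the tensor product:
\begin{equation*}
A\otimes B=\sum_{i,j}\lambda_i\mu_j\,(P_i\otimes Q_j).
\end{equation*}
The operators $P_i\otimes Q_j$ are projections, because $(P_i\otimes Q_j)^\dagger=P_i\otimes Q_j$ and $(P_i\otimes Q_j)^2=P_i^2\otimes Q_j^2$ by the mixed-product rule. They are also mutually orthogonal: $(P_i\otimes Q_j)(P_k\otimes Q_l)=P_iP_k\otimes Q_jQ_l=0$ unless $i=k$ and $j=l$. Finally, they sum to the identity, since $\sum_{i,j}P_i\otimes Q_j=(\sum_i P_i)\otimes(\sum_j Q_j)=I\otimes I$. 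Together with $\lambda_i\mu_j\geq 0$, this shows that $A\otimes B$ is positive semi-definite and gives an explicit orthogonal decomposition of it.

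The one point that needs care is that the products $\lambda_i\mu_j$ need not be distinct, so the expansion above is not literally the spectral decomposition of $A\otimes B$. To handle this, I would group the terms by value. For each $c\geq 0$, let $R_c=\sum_{\lambda_i\mu_j=c}P_i\otimes Q_j$. Each $R_c$ is again a projection, the $R_c$ are mutually orthogonal, and $A\otimes B=\sum_c c\,R_c$ is the genuine spectral decomposition. Then, by the definition of the functional calculus,
\begin{equation*}
(A\otimes B)^p=\sum_c c^p R_c=\sum_{i,j}(\lambda_i\mu_j)^p\,P_i\otimes Q_j .
\end{equation*}

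To finish, I would use $(\lambda_i\mu_j)^p=\lambda_i^p\mu_j^p$, which holds for nonnegative reals, including the cases where one factor is $0$. Applying bilinearity in reverse gives
\begin{equation*}
\sum_{i,j}\lambda_i^p\mu_j^p\,P_i\otimes Q_j=\Big(\sum_i\lambda_i^pP_i\Big)\otimes\Big(\sum_j\mu_j^pQ_j\Big)=A^p\otimes B^p ,
\end{equation*}
which is the claimed identity. I expect no step to be a real obstacle. The only subtle point is the regrouping by equal eigenvalues, which ensures that applying the functional calculus to the expansion is justified.
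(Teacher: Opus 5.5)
Your proposal is correct and follows essentially the same route as the paper: diagonalize $A$ and $B$, observe that the products of their spectral projections give a spectral decomposition of $A\otimes B$, apply the power eigenvalue-wise using $(\lambda_i\mu_j)^p=\lambda_i^p\mu_j^p$, and factor back by bilinearity. The paper does this with rank-one eigenprojectors $\ket{\psi_i\otimes\phi_j}\bra{\psi_i\otimes\phi_j}$ and tacitly assumes the functional calculus may be applied termwise. Your regrouping into the projections $R_c$ makes that step explicit, since products $\lambda_i\mu_j$ can coincide.
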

	\begin{proof}
		For the operators $A\in M_{m,n}(F),B\in M_{q, r}(F),C\in M_{n,k}(F),D\in M_{r, s}(F)$, we have \cite{horn1994topics}
		\begin{equation}\label{ha}
			(A\otimes B)(C\otimes D)=(AC)\otimes(BD).
		\end{equation}
		Every positive semi-definite operator is Hermitian. Thus, they are diagonalizable. As $A$ and $B$ are two positive semi-definite operators, then they can be expressed as
		$A = \sum_{i=1}^n \lambda_i \ket{\psi_i} \bra{\psi_i}$  and $B = \sum_{j = 1}^n \mu_j \ket{\phi_j} \bra{\phi_j}$,
		where $\lambda_i$ and $\mu_i$ are the eigenvalues of $A$ and $B$, respectively, as well as $\ket{\psi_i}$ and $\ket{\phi_i}$ are the eigenvectors corresponding to $\lambda_i$ and $\mu_i$, respectively. Now,
		\begin{equation}
			\left( A\otimes B\right)^p
			=\sum_{i=1}^n\sum_{j=1}^n\lambda_{i}^p\mu_{j}^p \ket{\psi_i\otimes\phi_j} \bra{\psi_i\otimes\phi_j}
			=\left( \sum_{i=1}^n\lambda_{i}^p\ket{\psi_i} \bra{\psi_i}\right)\otimes\left(\sum_{j=1}^n\mu_{j}^p \ket{\phi_j} \bra{\phi_j}\right) 
			=A^p\otimes B^p.
		\end{equation}
		Hence, the proof.
	\end{proof}
	This Lemma we use to prove the pseudo-additivity property of the $r$-deformed $\alpha$-$z$ R\'{e}nyi relative entropy, which is as follows:
	
	\begin{lemma}\label{p}
		\textbf{Pseudo-additivity:} $D_{\alpha,z}^{(r)}(\rho_{1}\otimes\rho_{2}||\sigma_{1}\otimes\sigma_{2}) = D_{\alpha,z}^{(r)}(\rho_{1}||\sigma_{1})+D_{\alpha,z}^{(r)}(\rho_{2}||\sigma_{2})+r(\alpha-1)D_{\alpha,z}^{(r)}(\rho_{1}||\sigma_{1})D_{\alpha,z}^{(r)}(\rho_{2}||\sigma_{2}).$
	\end{lemma}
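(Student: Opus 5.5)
The plan is to reduce the statement to \autoref{Properties of r-logarithm 1} by showing that the trace quantity inside the $r$-logarithm is multiplicative under tensor products. Write
\begin{equation*}
Q_{\alpha,z}(\rho\|\sigma) = \tr\left( \sigma^{\frac{1-\alpha}{2z}} \rho^{\frac{\alpha}{z}} \sigma^{\frac{1-\alpha}{2z}} \right)^z ,
\end{equation*}
so that $D_{\alpha,z}^{(r)}(\rho\|\sigma) = \frac{1}{\alpha-1}\ln_r Q_{\alpha,z}(\rho\|\sigma)$. The first goal is the identity
\begin{equation*}
Q_{\alpha,z}(\rho_1\otimes\rho_2\|\sigma_1\otimes\sigma_2) = Q_{\alpha,z}(\rho_1\|\sigma_1)\, Q_{\alpha,z}(\rho_2\|\sigma_2).
\end{equation*}

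For this, I apply \autoref{h4} to split $(\sigma_1\otimes\sigma_2)^{\frac{1-\alpha}{2z}} = \sigma_1^{\frac{1-\alpha}{2z}}\otimes\sigma_2^{\frac{1-\alpha}{2z}}$ and $(\rho_1\otimes\rho_2)^{\frac{\alpha}{z}} = \rho_1^{\frac{\alpha}{z}}\otimes\rho_2^{\frac{\alpha}{z}}$. The mixed-product rule (\ref{ha}) then turns the sandwiched product into $X_1\otimes X_2$, where $X_i = \sigma_i^{\frac{1-\alpha}{2z}} \rho_i^{\frac{\alpha}{z}} \sigma_i^{\frac{1-\alpha}{2z}}$. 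Each $X_i$ is positive semi-definite, so \autoref{h4} applies again and gives $(X_1\otimes X_2)^z = X_1^z\otimes X_2^z$. Finally, $\tr(A\otimes B)=\tr(A)\tr(B)$ yields the multiplicativity.

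One point needs care. If $\alpha>1$ the exponent $\frac{1-\alpha}{2z}$ is negative, and \autoref{h4} is stated only for positive $p$. In that case I interpret the power as a generalized inverse on the support. The same spectral-decomposition argument as in \autoref{h4} works for any real exponent, provided it is restricted to nonzero eigenvalues. The eigenvalues of $\sigma_1\otimes\sigma_2$ are the products $\lambda_i\mu_j$, and such a product is nonzero exactly when both factors are. So the support projections also factor as tensor products, and the identity survives. This is the main technical point; the rest is bookkeeping.

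Next I apply \autoref{Properties of r-logarithm 1} with $x=Q_{\alpha,z}(\rho_1\|\sigma_1)$ and $y=Q_{\alpha,z}(\rho_2\|\sigma_2)$:
\begin{equation*}
\ln_r(xy)=\ln_r x+\ln_r y + r\ln_r x\,\ln_r y .
\end{equation*}
Dividing by $\alpha-1$, the first two terms become $D_{\alpha,z}^{(r)}(\rho_1\|\sigma_1)+D_{\alpha,z}^{(r)}(\rho_2\|\sigma_2)$. For the cross term, I substitute $\ln_r x = (\alpha-1)D_{\alpha,z}^{(r)}(\rho_1\|\sigma_1)$ and $\ln_r y = (\alpha-1)D_{\alpha,z}^{(r)}(\rho_2\|\sigma_2)$. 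It then becomes $\frac{r(\alpha-1)^2}{\alpha-1}D_{\alpha,z}^{(r)}(\rho_1\|\sigma_1)D_{\alpha,z}^{(r)}(\rho_2\|\sigma_2)$, which is exactly the claimed coefficient $r(\alpha-1)$.
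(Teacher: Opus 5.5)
Your proposal is correct and follows the paper's proof essentially step for step. You use \autoref{h4} and the mixed-product rule (\ref{ha}) to show the trace functional is multiplicative under tensor products, then apply \autoref{Properties of r-logarithm 1}, and the factor $(\alpha-1)^2/(\alpha-1)$ gives the coefficient $r(\alpha-1)$. Your treatment of negative exponents, as generalized inverses on the support with support projections that factor across the tensor product, covers a case the paper skips, since \autoref{h4} is stated only for $p>0$. The same remark also applies to $\rho^{\alpha/z}$ when $\alpha<0$, not only to the $\sigma$-power when $\alpha>1$.
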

	\begin{proof}
		Expanding $D_{\alpha,z}^{(r)}(\rho_{1}\otimes\rho_{2}||\sigma_{1}\otimes\sigma_{2})$, we observe that
		\begin{equation}
			\begin{split}
				D_{\alpha,z}^{(r)}(\rho_{1}\otimes\rho_{2}||\sigma_{1}\otimes\sigma_{2})
				&=\frac{1}{\alpha - 1}\ln_{r}\tr \left[ \left( ( \sigma_{1}\otimes\sigma_{2})^{\frac{1 - \alpha}{2z}} (\rho_{1}\otimes\rho_{2})^{\frac{\alpha}{z}} (\sigma_{1}\otimes\sigma_{2})^{\frac{1 - \alpha}{2z}} \right)^z \right]\\
				&=\frac{1}{\alpha-1}\ln_{r}\tr\left[
				\left( \sigma_1^{\frac{1 - \alpha}{2z}} \otimes \sigma_2^{\frac{1 - \alpha}{2z}} \right)
				\left( \rho_1^{\frac{\alpha}{z}} \otimes \rho_2^{\frac{\alpha}{z}} \right)
				\left( \sigma_1^{\frac{1 - \alpha}{2z}} \otimes \sigma_2^{\frac{1 - \alpha}{2z}} \right)
				\right]^z,~~[\text{using \autoref{h4}}]\\
				&=\frac{1}{\alpha-1}\ln_{r}\tr\left[
				\left( \sigma_1^{\frac{1 - \alpha}{2z}} \rho_1^{\frac{\alpha}{z}} \sigma_1^{\frac{1 - \alpha}{2z}} \right)
				\otimes
				\left( \sigma_2^{\frac{1 - \alpha}{2z}} \rho_2^{\frac{\alpha}{z}} \sigma_2^{\frac{1 - \alpha}{2z}} \right)
				\right]^z,~~~~[\text{using equation (\ref{ha})}]\\
				&=\frac{1}{\alpha-1}\ln_{r}\tr\left[ \left( \sigma_1^{\frac{1 - \alpha}{2z}} \rho_1^{\frac{\alpha}{z}} \sigma_1^{\frac{1 - \alpha}{2z}} \right)^z
				\otimes\left( \sigma_2^{\frac{1 - \alpha}{2z}} \rho_2^{\frac{\alpha}{z}} \sigma_2^{\frac{1 - \alpha}{2z}} \right)^z\right],~~[\text{using \autoref{h4}}]\\
				&=\frac{1}{\alpha-1}\ln_{r}\left[\tr \left( \sigma_1^{\frac{1 - \alpha}{2z}} \rho_1^{\frac{\alpha}{z}} \sigma_1^{\frac{1 - \alpha}{2z}} \right)^z
				\cdot\tr
				\left( \sigma_2^{\frac{1 - \alpha}{2z}} \rho_2^{\frac{\alpha}{z}} \sigma_2^{\frac{1 - \alpha}{2z}} \right)^z\right], 
			\end{split}
		\end{equation}
		since $\tr(A\otimes B)=\tr(A)\tr(B)$. Using \autoref{Properties of r-logarithm 1}, we have $\ln_{r}(xy)=\ln_{r}(x)+\ln_{r}(y)+r\ln_{r}(x)\ln_{r}(x)$. Therefore,
		\begin{equation}
			\begin{split}
				D_{\alpha,z}^{(r)}(\rho_{1}\otimes\rho_{2}||\sigma_{1}\otimes\sigma_{2})
				&= \frac{1}{\alpha-1}\ln_{r}\tr\left( \sigma_1^{\frac{1 - \alpha}{2z}} \rho_1^{\frac{\alpha}{z}} \sigma_1^{\frac{1 - \alpha}{2z}} \right)^z+\frac{1}{\alpha-1}\ln_{r}\tr\left( \sigma_2^{\frac{1 - \alpha}{2z}} \rho_2^{\frac{\alpha}{z}} \sigma_2^{\frac{1 - \alpha}{2z}} \right)^z + 
				\\
				& \hspace{1cm}  r\frac{1}{\alpha-1}\ln_{r}\tr\left( \sigma_1^{\frac{1 - \alpha}{2z}} \rho_1^{\frac{\alpha}{z}} \sigma_1^{\frac{1 - \alpha}{2z}} \right)^z\ln_{r}\tr\left( \sigma_2^{\frac{1 - \alpha}{2z}} \rho_2^{\frac{\alpha}{z}} \sigma_2^{\frac{1 - \alpha}{2z}} \right)^z
				\\
				&=D_{\alpha,z}^{(r)}(\rho_{1}||\sigma_{1})+D_{\alpha,z}^{(r)}(\rho_{2}||\sigma_{2})+r(\alpha-1)D_{\alpha,z}^{(r)}(\rho_{1}||\sigma_{1})D_{\alpha,z}^{(r)}(\rho_{2}||\sigma_{2}).
			\end{split}
		\end{equation}
		Hence, the proof.
	\end{proof}
	
	Now, we describe the convexity property of $D_{\alpha, z}^{(r)}$. The operator concave and convex functions \cite{bhatia2013matrix} are defined as follows.
	\begin{definition}
		Let $f(A,B)$ be a real valued function of two operator variables and $ 0 \leq p \leq 1$. We say $f$ is jointly concave, if
		\begin{equation}
			f(pA_{1}+(1-p)A_{2}, pB_{1}+(1-p)B_{2}) \geq pf(A_{1},B_{1})+(1-p)f(A_{2},B_{2});
		\end{equation}
		 and jointly convex, if
		\begin{equation}
			f(pA_{1}+(1-p)A_{2}, pB_{1}+(1-p)B_{2}) \leq pf(A_{1},B_{1})+(1-p)f(A_{2},B_{2});
		\end{equation}
		for all $A_{1}$, $A_{2}$, $B_{1}$, and $B_{2}$.
	\end{definition}
	
	\begin{lemma}\label{r}
		\textbf{Joint convexity:} For $p\in [0,1]$, 
		we have
		$$ D_{\alpha,z}^{(r)}\left(p\rho_{1}+(1-p)\rho_{2}||p\sigma_{1}+(1-p)\sigma_{2} \right)\leq pD_{\alpha,z}^{(r)}(\rho_{1}||\sigma_{1})+(1-p)	D_{\alpha,z}^{(r)}(\rho_{2}||\sigma_{2}).$$
	\end{lemma}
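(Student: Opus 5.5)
Write $Q(\rho,\sigma)=\tr\left(\sigma^{\frac{1-\alpha}{2z}}\rho^{\frac{\alpha}{z}}\sigma^{\frac{1-\alpha}{2z}}\right)^z$, so that $D_{\alpha,z}^{(r)}(\rho\|\sigma)=\frac{1}{\alpha-1}\ln_r Q(\rho,\sigma)$. The statement has no parameter restrictions, but it cannot hold for all $(\alpha,z,r)$. For instance, $r=1$, $z=1$ and commuting $\rho,\sigma$ give the classical Tsallis divergence, which is jointly convex only for $\alpha\in[0,2]$. I will therefore prove it in the regime where joint concavity or convexity of $Q$ is available, and I will make the restriction explicit. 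I take two known facts as input. First, for $0\le\alpha<1$ and $z\ge\max\{\alpha,1-\alpha\}$, the map $(\rho,\sigma)\mapsto Q(\rho,\sigma)$ is jointly concave. Second, for $1<\alpha\le 2$ and $\frac{\alpha}{2}\le z\le\alpha$, it is jointly convex. Both are the Zhang / Carlen--Frank--Lieb results behind the data-processing inequality for $\alpha$-$z$ entropies.

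\textbf{Case $0\le\alpha<1$.} Here $\frac{1}{\alpha-1}<0$, so I need $\ln_r Q$ to be jointly concave. Lemma~\ref{Properties of r-logarithm 2} and Lemma~\ref{properties of r-logagithm 3} say that $\ln_r$ is increasing and, for $0<r\le1$, concave. A concave nondecreasing function composed with a jointly concave function is jointly concave. Explicitly, I will use $Q(p\rho_1+(1-p)\rho_2,\,p\sigma_1+(1-p)\sigma_2)\ge pQ(\rho_1,\sigma_1)+(1-p)Q(\rho_2,\sigma_2)$, then apply monotonicity of $\ln_r$, then the Jensen inequality of Lemma~\ref{Properties of r-logarithm 2}. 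Multiplying by $\frac{1}{\alpha-1}<0$ reverses the inequality and gives the claim for $0<r\le1$.

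\textbf{Case $1<\alpha\le2$.} Now I need $\ln_r Q$ to be jointly convex. For $r\ge1$, the function $x\mapsto\frac{x^r-1}{r}$ is convex and increasing on $[0,\infty)$, so composing it with the jointly convex $Q$ gives a jointly convex function. Multiplying by $\frac{1}{\alpha-1}>0$ then finishes this case for $r\ge1$.

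The main obstacle is the mismatch in the remaining ranges: $r>1$ with $\alpha<1$, and $r<1$ with $\alpha>1$. There the curvature of $\ln_r$ works against the curvature of $Q$, so I expect joint convexity to fail in general. My plan is to state the lemma with these parameter restrictions. If a broader statement is wanted, the first thing I would try is the homogeneity trick: $Q$ is jointly homogeneous of degree one, so it is jointly concave (respectively convex) as a perspective-type function. One could then hope to handle $Q^r$ directly through concavity of $t\mapsto t^r$ only for $r\in(0,1]$, which still does not extend the range. I would then check a $2\times2$ commuting counterexample to confirm that the restriction is necessary.
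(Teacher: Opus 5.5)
Your argument is correct and is essentially the paper's proof: joint concavity of the trace functional (Hiai) plus monotonicity and concavity of $\ln_r$ for $0<r\le1$ and the negative sign of $\frac{1}{\alpha-1}$ when $\alpha<1$, and, when $\alpha>1$, joint convexity of the trace functional composed with the convex increasing $\ln_r$ for $r\ge1$. Your Zhang range $\frac{\alpha}{2}\le z\le\alpha$ contains the paper's cases $z=1$, $z=\frac{\alpha}{2}$, and $z=\alpha$ for $\alpha\le 2$; the paper also keeps $z=\alpha$ for $\alpha>2$, which you recover with Zhang's full range $\max\{\frac{\alpha}{2},\alpha-1\}\le z\le\alpha$.

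One side remark is wrong, though the proof does not use it. For commuting inputs $\sum_i p_i^\alpha q_i^{1-\alpha}$ is the perspective of $t\mapsto t^\alpha$, so with $r=z=1$ the classical Tsallis divergence is jointly convex for every $\alpha>1$. The cutoff $\alpha\le 2$ is a noncommutative effect (Ando's range for $\tr\rho^\alpha\sigma^{1-\alpha}$), so a commuting counterexample has to use something like $\alpha<0$ instead.
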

	\begin{proof}
		Consider the trace functional
		\begin{equation}\label{trace_functional_1}
			f_{\alpha,z}(\rho,\sigma)= \tr\left( \sigma^{\frac{1-\alpha}{2z}} \rho^{\frac{\alpha}{z}} \sigma^{\frac{1-\alpha}{2z}} \right)^z.
		\end{equation}
		Since, $f_{\alpha,z}(\rho || \sigma)$ is jointly concave function for $0 < \alpha \leq 1$ and max $\{\alpha, 1-\alpha \} \leq z$ \cite{hiai2013concavity} therefore,
		\begin{equation*}
			\begin{split}
				& f_{\alpha,z}(p\rho_{1}+(1-p)\rho_{2},p\sigma_{1}+(1-p)\sigma_{2})\geq pf_{\alpha,z}(\rho_{1},\sigma_{1})+ (1-p)f_{\alpha,z}(\rho_{2},\sigma_{2}) \\
				\text{or}~ 
				&\tr\left( (p\sigma_{1}+(1-p)\sigma_{2})^\frac{1-\alpha}{2z}(p\rho_{1}+(1-p)\rho_{2})^\frac{\alpha}{z}(p\sigma_{1}+(1-p)\sigma_{2})^\frac{1-\alpha}{2z}\right)^{z}  \\ & \hspace{4cm} \geq
				p\tr\left( \sigma_{1}^\frac{1-\alpha}{2z}\rho_{1}^\frac{\alpha}{z}\sigma_{1}^\frac{1-\alpha}{2z}\right)^{z} +(1-p)\tr\left( \sigma_{2}^\frac{1-\alpha}{2z}\rho_{2}^\frac{\alpha}{z}\sigma_{2}^\frac{1-\alpha}{2z}\right)^{z} \\
				\text{or}~
				&\ln_{r}\left(\tr\left( (p\sigma_{1}+(1-p)\sigma_{2})^\frac{1-\alpha}{2z}(p\rho_{1}+(1-p)\rho_{2})^\frac{\alpha}{z}(p\sigma_{1}+(1-p)\sigma_{2})^\frac{1-\alpha}{2z}\right)^{z}\right)  \\ & \hspace{4cm} \geq
				\ln_{r}\left[p\tr\left(\sigma_{1}^\frac{1-\alpha}{2z}\rho_{1}^\frac{\alpha}{z}\sigma_{1}^\frac{1-\alpha}{2z}\right)^{z}+(1-p)\tr\left( \sigma_{2}^\frac{1-\alpha}{2z}\rho_{2}^\frac{\alpha}{z}\sigma_{2}^\frac{1-\alpha}{2z}\right)^{z}\right], ~ \text{using \autoref{properties of r-logagithm 3}} \\
				\text{or}~
				&\ln_{r}\left(\tr\left( (p\sigma_{1}+(1-p)\sigma_{2})^\frac{1-\alpha}{2z}(p\rho_{1}+(1-p)\rho_{2})^\frac{\alpha}{z}(p\sigma_{1}+(1-p)\sigma_{2})^\frac{1-\alpha}{2z}\right)^{z}\right) \\ & \hspace{4cm}
				\geq  p\ln_{r}\tr\left(\sigma_{1}^\frac{1-\alpha}{2z}\rho_{1}^\frac{\alpha}{z}\sigma_{1}^\frac{1-\alpha}{2z}\right)^{z}+(1-p)\ln_{r}\tr\left( \sigma_{2}^\frac{1-\alpha}{2z}\rho_{2}^\frac{\alpha}{z}\sigma_{2}^\frac{1-\alpha}{2z}\right)^{z},~\text{using \autoref{Properties of r-logarithm 2}}\\
				\text{or}~
				&\frac{1}{\alpha-1}\ln_{r}\left(\tr\left( (p\sigma_{1}+(1-p)\sigma_{2})^\frac{1-\alpha}{2z}(p\rho_{1}+(1-p)\rho_{2})^\frac{\alpha}{z}(p\sigma_{1}+(1-p)\sigma_{2})^\frac{1-\alpha}{2z}\right)^{z}\right) \\ & \hspace{4cm} \leq  \frac{1}{\alpha-1}p\ln_{r}\tr\left(\sigma_{1}^\frac{1-\alpha}{2z}\rho_{1}^\frac{\alpha}{z}\sigma_{1}^\frac{1-\alpha}{2z}\right)^{z}+(1-p)\ln_{r}\tr\left( \sigma_{2}^\frac{1-\alpha}{2z}\rho_{2}^\frac{\alpha}{z}\sigma_{2}^\frac{1-\alpha}{2z}\right)^{z}\\
				\text{or}~
				& D_{\alpha,z}^{(r)}\left(p\rho_{1}+(1-p)\rho_{2}||p\sigma_{1}+(1-p)\sigma_{2} \right)\leq pD_{\alpha,z}^{(r)}(\rho_{1}||\sigma_{1})+(1-p)	D_{\alpha,z}^{(r)}(\rho_{2}||\sigma_{2}).
			\end{split}
		\end{equation*}
		
		Thus, by Lieb convexity theorem \cite{bhatia2013matrix}, $D_{\alpha}^{(r)}(\rho||\sigma)$ is jointly convex for $0 < \alpha \leq 1$, max $\{\alpha, 1-\alpha \} \leq z$ and $r\in(0,1]$. Again $f_{\alpha,z}(\rho||\sigma)$ is jointly convex function for
		\begin{itemize}
			\item $1\leq \alpha \leq 2$ and $z=1$\cite{ando1979concavity},
			\item $1\leq \alpha$ and $z=\alpha$\cite{frank2013monotonicity,beigi2013sandwiched}, 
			\item $1 \leq \alpha \leq 2$ and $z=\frac{\alpha}{2}$\cite{carlen2016some}.
		\end{itemize}
		Taking $r>1$, we get the same result for these three cases.
	\end{proof}
	
	\begin{definition}
		\textbf{Completely Positive Trace-Preserving map (CPTP map)\cite{choi1975completely, ruskai2002analysis}:} Let $\mathcal{M}_n$ be the set of complex operators of order $n$ and $A$ be a positive semi-definite operator on $\mathcal{M}_n$. Then, linear map $\Phi$ from $\mathcal{M}_n$ to $\mathcal{M}_m$ is completely positive iff it admits an expression 
		$\Phi(A)=\sum_{i}{V_i}^\dagger A {V_i}$ where $V_i$ are operators of order $n \times m$. The map $\Phi$ is said to be trace-preserving if $\sum_{i}{V_i}{V_i}^\dagger=I$. 
	\end{definition}
	\begin{lemma}\label{n}
		Let $\rho$, $\sigma$ be two positive semi-definite operators and $\tau$ be a density operator then $f_{\alpha,z}(\rho,\sigma)=f_{\alpha,z}(\rho \otimes \tau,\sigma\otimes\tau).$
	\end{lemma}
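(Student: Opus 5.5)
The plan is to run the same computation as in the proof of \autoref{p} (pseudo-additivity), with the second pair taken to be $(\tau,\tau)$, and then show that the factor coming from $\tau$ equals $1$.

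First I would expand the left-hand side using \autoref{h4}:
$$(\sigma\otimes\tau)^{\frac{1-\alpha}{2z}} = \sigma^{\frac{1-\alpha}{2z}}\otimes\tau^{\frac{1-\alpha}{2z}}, \qquad (\rho\otimes\tau)^{\frac{\alpha}{z}}=\rho^{\frac{\alpha}{z}}\otimes\tau^{\frac{\alpha}{z}}.$$
When an exponent is negative, these powers are understood on the support, i.e.\ as generalized inverses. The spectral-decomposition proof of \autoref{h4} still applies if we only sum over nonzero eigenvalues.

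Next I would apply the mixed-product rule (\ref{ha}) twice to the three-fold product. This gives
$$\left(\sigma^{\frac{1-\alpha}{2z}}\rho^{\frac{\alpha}{z}}\sigma^{\frac{1-\alpha}{2z}}\right)\otimes\left(\tau^{\frac{1-\alpha}{2z}}\tau^{\frac{\alpha}{z}}\tau^{\frac{1-\alpha}{2z}}\right).$$
Both tensor factors are positive semi-definite, so \autoref{h4} applies again to the outer power $z$. Then $\tr(A\otimes B)=\tr A\,\tr B$ yields
$$f_{\alpha,z}(\rho\otimes\tau,\sigma\otimes\tau)=f_{\alpha,z}(\rho,\sigma)\,f_{\alpha,z}(\tau,\tau).$$

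The remaining step is to evaluate $f_{\alpha,z}(\tau,\tau)$. All the powers of $\tau$ commute, being functions of the same operator. Their exponents add to
$$\frac{1-\alpha}{2z}+\frac{\alpha}{z}+\frac{1-\alpha}{2z}=\frac1z,$$
so the product is $\tau^{1/z}$ on $\Supp(\tau)$. Hence $f_{\alpha,z}(\tau,\tau)=\tr\left(\tau^{1/z}\right)^z=\tr\tau=1$, because $\tau$ is a density operator. This proves the lemma.

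The only point needing care, which I expect to be the main (mild) obstacle, is the case of negative exponents, for instance $\alpha>1$. There I must check that the support-restricted powers still multiply correctly, so that $\tau^{a}\tau^{b}=\tau^{a+b}$ holds on $\Supp(\tau)$ and also on tensor products. I would settle this directly from the spectral decompositions: only eigenvalue pairs $\lambda_i\mu_j>0$ contribute, and these are exactly the pairs in which both factors are nonzero.
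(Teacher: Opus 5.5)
Your proposal is correct and follows essentially the same route as the paper: Lemma \ref{h4} to split the powers of $\sigma\otimes\tau$ and $\rho\otimes\tau$, the mixed-product rule (equation (\ref{ha})), Lemma \ref{h4} again for the outer power $z$, multiplicativity of the trace over tensor products, and finally $\tr\left((\tau^{1/z})^z\right)=\tr\tau=1$. The paper passes over negative exponents without comment, and your spectral argument for support-restricted (generalized-inverse) powers handles that case correctly.
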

	\begin{proof}
		We have $f_{\alpha,z}(\rho,\sigma)= \tr\left( \sigma^{\frac{1-\alpha}{2z}} \rho^{\frac{\alpha}{z}} \sigma^{\frac{1-\alpha}{2z}} \right)^z $. Therefore,
		\begin{equation}
			\begin{split}
				f_{\alpha,z}(\rho \otimes \tau,\sigma\otimes\tau) &= \tr\left( (\sigma \otimes \tau)^{\frac{1-\alpha}{2z}} (\rho\otimes \tau)^{\frac{\alpha}{z}} (\sigma\otimes \tau)^{\frac{1-\alpha}{2z}} \right)^z, \\
				&= \tr\left( (\sigma^{\frac{1-\alpha}{2z}} \otimes \tau^{\frac{1-\alpha}{2z}}) (\rho^{\frac{\alpha}{z}}\otimes \tau^{\frac{\alpha}{z}}) (\sigma^{\frac{1-\alpha}{2z}}\otimes \tau^{\frac{1-\alpha}{2z}}) \right)^z , ~~[\text{using \autoref{h4}}]\\
				&= \tr\left[ \left( \sigma^{\frac{1-\alpha}{2z}} \rho^{\frac{\alpha}{z}} \sigma^{\frac{1-\alpha}{2z}} \right) \otimes \left( \tau^{\frac{1-\alpha}{2z}} \tau^{\frac{\alpha}{z}} \tau^{\frac{1-\alpha}{2z}} \right) \right]^z, ~~[\text{using equation (\ref{ha})}]\\
				&= \tr\left[ \left( \sigma^{\frac{1-\alpha}{2z}} \rho^{\frac{\alpha}{z}} \sigma^{\frac{1-\alpha}{2z}} \right)^z \otimes \left( \tau^{\frac{1-\alpha}{2z}} \tau^{\frac{\alpha}{z}} \tau^{\frac{1-\alpha}{2z}} \right)^z \right],~~[\text{using \autoref{h4}}]\\
				&= \tr \left( \sigma^{\frac{1-\alpha}{2z}} \rho^{\frac{\alpha}{z}} \sigma^{\frac{1-\alpha}{2z}} \right)^z . \tr\left( \tau^{\frac{1}{z}} \right)^z, \text{since} \tr(A\otimes B)= \tr(A)\tr(B)\\
				&= f_{\alpha,z}(\rho,\sigma), ~\text{since} \tr(\tau)=1.
			\end{split}
		\end{equation}
		Hence, the proof.
	\end{proof}

	\begin{lemma}\label{u}
		Let $\mathcal{H}$ be a finite-dimensional Hilbert space with dimension $n$ and $u$ be a unitary operator in $\mathcal{H}$. Then
		\begin{equation}
			\int du \left( u\ket{i}\bra{j}u^{\dagger}\right) = \begin{cases} 
				\frac{I}{n}, & \text{if } i=j; \\ 
				0, & \text{otherwise}. 
			\end{cases}
		\end{equation}
		Here $du$ is normalized Haar measure on all unitaries defined on $\mathcal{H}$.
	\end{lemma}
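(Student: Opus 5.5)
The approach rests on the left–right invariance of the normalized Haar measure $du$ together with Schur's lemma. Set $X_{ij} = \int du\, u\ket{i}\bra{j}u^{\dagger}$. For any fixed unitary $v$, the substitution $u \mapsto vu$ leaves $du$ invariant, so
\begin{equation*}
v X_{ij} v^{\dagger} = \int du\, (vu)\ket{i}\bra{j}(vu)^{\dagger} = X_{ij}.
\end{equation*}
Hence $X_{ij}$ commutes with every unitary on $\mathcal{H}$. The unitaries span all of $\mathcal{L}(\mathcal{H})$: every operator is a combination of Hermitian operators, and every Hermitian operator is a combination of unitaries $\exp(itH)$, or directly of its spectral projections $P = (I - (I-2P))/2$. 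It follows that $X_{ij}$ lies in the commutant of the full matrix algebra, so $X_{ij} = c_{ij} I$ for some scalar $c_{ij}$.

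Next, I determine $c_{ij}$ by taking the trace. Linearity of the integral and cyclicity of the trace give
\begin{equation*}
n c_{ij} = \tr X_{ij} = \int du\, \tr\left(u\ket{i}\bra{j}u^{\dagger}\right) = \int du\, \braket{j|i} = \delta_{ij},
\end{equation*}
using that $du$ is normalized, $\int du = 1$. Thus $c_{ii} = 1/n$ and $c_{ij} = 0$ for $i \neq j$, which is exactly the claimed case split.

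The only step that needs care is the commutant argument: I must justify that an operator commuting with all unitaries is a scalar multiple of $I$. I would argue this concretely rather than cite representation theory. Commuting with the diagonal unitaries $\mathrm{diag}(e^{i\theta_1},\dots,e^{i\theta_n})$ forces $X$ to be diagonal. Commuting with the permutation unitaries then forces all the diagonal entries to be equal. The remaining ingredients are the existence and bi-invariance of the Haar measure on the compact group $U(n)$, which are standard.
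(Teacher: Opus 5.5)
Your proof is correct, but it takes a different route from the paper. The paper does not argue from invariance. It imports from the literature the first-moment (Schur orthogonality / Weingarten) formula for matrix entries, $\int du\, u_{k,i}\overline{u_{l,j}} = \frac{1}{n}\delta_{k,l}\delta_{i,j}$. It then expands $u\ket{i}\bra{j}u^{\dagger} = \sum_{k,l} u_{k,i}\overline{u_{l,j}}\ket{k}\bra{l}$ and integrates entry by entry to get $\frac{\delta_{i,j}}{n}I$. You instead derive the result from three ingredients: left invariance of $du$, the fact that the commutant of the unitary group is $\mathbb{C}I$ (which you justify concretely with diagonal phases and permutations), and a trace computation that fixes the scalar.

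Your route is self-contained. Taking the $(k,l)$ entry of your conclusion actually proves the formula the paper cites. You also avoid the coordinate bookkeeping, which the paper handles loosely: its displayed formula has a stray $I$ on a scalar identity and writes $u^{\dagger}_{m,n}$ for what must be $\overline{u_{m,n}}$. Your argument also works verbatim for an arbitrary operator, giving $\int du\, uAu^{\dagger} = \frac{\tr A}{n}I$. This is the basis-free form of the partial-trace identity $\int du\,(I\otimes u)Y(I\otimes u^{\dagger}) = \tr_2(Y)\otimes\frac{I}{n}$ used in Lemma~\ref{v}. The paper's route is shorter on the page only because the substantive step is delegated to the citation.

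Your trace step uses $\braket{j|i}=\delta_{ij}$, i.e.\ that $\{\ket{i}\}$ is orthonormal, which the statement also tacitly assumes.
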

	\begin{proof}
		Form \cite{barthel2023absence}, We have
		\begin{equation}\label{u1}
			\int du \left( u_{i,j}u_{m,n}^{\dagger}\right) =\frac{I}{n}\delta_{i,m}\delta_{j,n}.
		\end{equation}
		Now, $u\ket{i}=\sum_{k}u_{k,i}\ket{k}$ and $\bra{j}u^{\dagger}=\left( u\ket{j}\right)^\dagger=\sum_{l}u_{l,j}^{\dagger}\bra{l}$. Therefore,
		\begin{equation}
			\begin{split}
				\int du \left( u\ket{i}\bra{j}u^{\dagger}\right) =\sum_{k,l}\left( \int du (u_{k,i} u_{l,j}^{\dagger})\right)\ket{k}\bra{l}
				&=\sum_{k,l}\left( \frac{I}{n}\delta_{k,l}\delta_{i,j}\right)\ket{k}\bra{l},~~[\text{using equation (\ref{u1})}]\\
				&=\frac{\delta_{i,j}}{n}I \\
				&=\begin{cases} 
					\frac{I}{n}, & \text{if } i=j; \\ 
					0, & \text{otherwise}. 
				\end{cases}
			\end{split}
		\end{equation}
		Hence, the proof.
	\end{proof}
	
	\begin{lemma}\label{v}
		For any two positive semi-definite operators $\rho$, $\sigma$ and a jointly concave function $f_{\alpha,z}$ in $(\rho,\sigma)$, we have $f_{\alpha,z}(\rho,\sigma) \leq f_{\alpha,z}(\Phi(\rho),\Phi(\sigma))$, where $\Phi$ is a CPTP map.
	\end{lemma}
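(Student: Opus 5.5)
The plan is the standard route from joint concavity to monotonicity under channels: a Stinespring dilation, followed by a unitary twirl on the environment. Write the CPTP map $\Phi:\mathcal{M}_n\to\mathcal{M}_m$ in Stinespring form
\[
\Phi(A)=\tr_E\left[ W (A\otimes \ket{0}\bra{0}_E) W^\dagger\right],
\]
where $E$ is an environment of dimension $d$ and $W$ is a unitary on the enlarged space. We may pad the input and output spaces with zeros so that they have a common dimension; this padding does not change $f_{\alpha,z}$.

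The argument then proceeds in three steps.

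First, append a pure environment state. By \autoref{n} with $\tau=\ket{0}\bra{0}$,
\[
f_{\alpha,z}(\rho,\sigma)=f_{\alpha,z}(\rho\otimes\tau,\sigma\otimes\tau).
\]

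Second, apply the unitary $W$. Unitary invariance of $f_{\alpha,z}$ is the computation already done in \autoref{h}, applied to the trace functional itself. It gives
\[
f_{\alpha,z}(\rho\otimes\tau,\sigma\otimes\tau)=f_{\alpha,z}(\omega_\rho,\omega_\sigma),\qquad \omega_X=W(X\otimes\tau)W^\dagger .
\]

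Third, implement the partial trace as a twirl. Using \autoref{u} (expanding an operator in the basis $\ket{i}\bra{j}$ on $E$), one gets
\[
\int du\,(I\otimes u)\,\omega\,(I\otimes u)^\dagger=\tr_E(\omega)\otimes \frac{I_E}{d}.
\]
Since a finite-dimensional Haar integral can be replaced by a finite convex combination (for example, averaging over the $d^2$ discrete Weyl unitaries), joint concavity of $f_{\alpha,z}$ applies: Jensen's inequality for finitely many terms follows from the two-point definition by induction. Together with unitary invariance of each term, this yields
\[
f_{\alpha,z}(\omega_\rho,\omega_\sigma)\le \int du\, f_{\alpha,z}\big((I\otimes u)\omega_\rho(I\otimes u)^\dagger,(I\otimes u)\omega_\sigma(I\otimes u)^\dagger\big)
\le f_{\alpha,z}\Big(\Phi(\rho)\otimes\tfrac{I_E}{d},\Phi(\sigma)\otimes\tfrac{I_E}{d}\Big).
\]
Wait — concavity gives the opposite direction: $f(\text{average})\ge \text{average of } f = f(\omega_\rho,\omega_\sigma)$. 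That is exactly the inequality we want, so the chain should read
\[
f_{\alpha,z}\Big(\Phi(\rho)\otimes\tfrac{I_E}{d},\Phi(\sigma)\otimes\tfrac{I_E}{d}\Big)\ \ge\ \int du\, f_{\alpha,z}(\cdots)\ =\ f_{\alpha,z}(\omega_\rho,\omega_\sigma).
\]
Finally, \autoref{n} with $\tau=I_E/d$, which is a density operator, removes the environment factor, and combining the three steps gives $f_{\alpha,z}(\Phi(\rho),\Phi(\sigma))\ge f_{\alpha,z}(\rho,\sigma)$.

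The main obstacle is the third step. The integral identity of \autoref{u} has to be lifted to the bipartite setting, and the continuous average has to be justified as legitimate under a concavity inequality stated only for two-point mixtures. I would handle this with the finite unitary 1-design (Weyl operators) mentioned above, so that only finite Jensen is needed. A secondary subtlety is the dimension bookkeeping in the Stinespring dilation, which the zero-padding handles because $f_{\alpha,z}$ is unchanged by direct-summing zeros.
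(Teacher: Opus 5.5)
Your argument is correct and is essentially the paper's proof: a Stinespring dilation with a pure ancilla, the environment twirl of \autoref{u} turning the dilated state into $\Phi(\rho)\otimes I/d$, joint concavity combined with unitary invariance, and \autoref{n} to attach and remove the ancilla. Three of your choices tighten points the paper leaves implicit: averaging over the $d^2$ Weyl unitaries so that only finite Jensen is needed, the zero-padding, and a general environment dimension $d$ (the paper fixes the environment dimension at $n$ and applies concavity directly to a Haar integral). The ``Wait'' passage can simply be deleted, because your first chain was already valid; its first relation is an equality by unitary invariance.
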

	\begin{proof}
		Let $\mathcal{H}_{1}$ and $\mathcal{H}_2$ be two finite-dimensional Hilbert spaces with dimension $n$ and $\rho$, $\sigma$ defined on $\mathcal{H}_1$. Then, for any CPTP map $\Phi$ there exists a pure state $\tau \in \mathcal{H}_2$ and a unitary operator $U^{(1,2)}$ defined on $(\mathcal{H}_1 \otimes \mathcal{H}_2)$ \cite{wilde2013quantum} such that $\Phi(\rho)=\tr_2 [U^{(1,2)}(\rho\otimes\tau)(U^{(1,2)})^{\dagger}]$. Let $U^{(1,2)}(\rho\otimes\tau)(U^{(1,2)})^{\dagger}= [A_{i,j}]$, then this can be written as:
		\begin{equation}
			U^{(1,2)}(\rho\otimes\tau)(U^{(1,2)})^{\dagger}=\sum_{i,j} A_{i,j}\otimes \ket{i}\bra{j}.
		\end{equation}
		Now, for unitary operator $u\in \mathcal{H}_{2}$, we have
		\begin{equation}
			(I \otimes u) U^{(1,2)}(\rho\otimes\tau)(U^{(1,2)})^{\dagger}(I \otimes u^{\dagger})= \sum_{i,j} A_{i,j}\otimes u\ket{i}\bra{j}u^{\dagger}.
		\end{equation}
		Let $du$ be normalized Haar measure on all unitaries on $\mathcal{H}_{2}$. Therefore,
		\begin{equation}\label{v1}
			\begin{split}
				\int du (I \otimes u) U^{(1,2)}(\rho\otimes\tau)(U^{(1,2)})^{\dagger}(I \otimes u^{\dagger})
				=&\int du \left( \sum_{i,j} A_{i,j}\otimes u\ket{i}\bra{j}u^{\dagger} \right)\\
				=&\sum_{i} A_{i,i}\otimes \frac{I}{n}, ~~~[\text{using \autoref{u}}] \\
				=&\tr_{2}\left[A_{i,j}\otimes \ket{i}\bra{j}\right]\otimes \frac{I}{n}\\
				=&\tr_{2}\left[U^{(1,2)}(\rho\otimes\tau)(U^{(1,2)})^{\dagger}\right]\otimes \frac{I}{n}\\
				=&\Phi(\rho)\otimes \frac{I}{n} .
			\end{split}
		\end{equation}
		Now,
		\begin{equation}
			\begin{split}
				&f_{\alpha,z}(\Phi(\rho),\Phi(\sigma))\\
				=&f_{\alpha,z}(\Phi(\rho)\otimes \frac{I}{n},\Phi(\sigma)\otimes \frac{I}{n}),~~[\text{using \autoref{n}}]\\
				=&f_{\alpha,z}\left( \int du (I \otimes u) U^{(1,2)}(\rho\otimes\tau)(U^{(1,2)})^{\dagger}(I \otimes u^{\dagger}),\int du (I \otimes u) U^{(1,2)}(\sigma\otimes\tau)(U^{(1,2)})^{\dagger}(I \otimes u^{\dagger})\right)~~[\text{using equation(\ref{v1}})]\\
				=&f_{\alpha,z}\left( \int du V_{u}(\rho\otimes\tau)(V_{u})^{\dagger},\int du V_{u}(\sigma\otimes\tau)(V_{u})^{\dagger}\right),~~[\text{putting~} (I \otimes u) U^{(1,2)}=V_{u}]\\
				\geq & \int du~f_{\alpha,z}\left( V_{u}(\rho\otimes\tau)(V_{u})^{\dagger}, V_{u}(\sigma\otimes\tau)(V_{u})^{\dagger}\right), [\text{since $f_{\alpha,z}$ is jointly concave}] \\
				=& \int du~f_{\alpha,z}\left( (\rho\otimes\tau), (\sigma\otimes\tau)\right), [\text{using equation (\ref{h2})}]\\
				=&f_{\alpha,z}\left( (\rho\otimes\tau), (\sigma\otimes\tau)\right),~~[\text{since $du$ is normalized}]\\
				=&f_{\alpha,z}(\rho,\sigma)~~[\text{using \autoref{n}}].
			\end{split}
		\end{equation}
		Hence, the proof.
	\end{proof}

		\begin{lemma}\label{q}
		\textbf{Data-processing inequality:} Let $\Phi$  be a CPTP map and $\rho$, $\sigma$ be two positive semi-definite operators, then $D_{\alpha,z}^{(r)}\left(\Phi(\rho)||\Phi(\sigma) \right)\leq D_{\alpha,z}^{(r)}(\rho||\sigma)$.
	\end{lemma}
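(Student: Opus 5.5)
The plan is to reduce the data-processing inequality to a monotonicity statement for the trace functional $f_{\alpha,z}(\rho,\sigma)=\tr\left( \sigma^{\frac{1-\alpha}{2z}} \rho^{\frac{\alpha}{z}} \sigma^{\frac{1-\alpha}{2z}} \right)^z$ from (\ref{trace_functional_1}). The step from $f_{\alpha,z}$ to $D_{\alpha,z}^{(r)}$ then only uses the monotonicity of $\ln_r$ (\autoref{properties of r-logagithm 3}) together with the sign of $\frac{1}{\alpha-1}$. Writing $D_{\alpha,z}^{(r)}(\rho\|\sigma)=\frac{1}{\alpha-1}\ln_r f_{\alpha,z}(\rho,\sigma)$, I split according to whether $f_{\alpha,z}$ is jointly concave or jointly convex. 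These are exactly the parameter regimes listed in the proof of \autoref{r}.

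\textbf{Case 1: $0<\alpha<1$, $\max\{\alpha,1-\alpha\}\le z$.} Here $f_{\alpha,z}$ is jointly concave \cite{hiai2013concavity}. \autoref{v} gives $f_{\alpha,z}(\rho,\sigma)\le f_{\alpha,z}(\Phi(\rho),\Phi(\sigma))$. Since $\ln_r$ is increasing for $r>0$, applying it preserves this inequality. Multiplying by $\frac{1}{\alpha-1}<0$ reverses it, which yields $D_{\alpha,z}^{(r)}(\Phi(\rho)\|\Phi(\sigma))\le D_{\alpha,z}^{(r)}(\rho\|\sigma)$.

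\textbf{Case 2: $\alpha>1$, with $(\alpha,z)$ in one of the jointly convex regimes.} These regimes are $1\le\alpha\le2,\ z=1$; $z=\alpha$; and $1\le\alpha\le 2,\ z=\alpha/2$. For them I redo the proof of \autoref{v} with the inequality reversed. I represent $\Phi$ through the Stinespring unitary $U^{(1,2)}$ and the ancilla $\tau$, and twirl with the Haar measure using \autoref{u}. I remove the ancilla with \autoref{n} and use unitary invariance of $f_{\alpha,z}$, as in (\ref{h2}). Joint convexity, applied to the Haar integral, gives $f_{\alpha,z}(\Phi(\rho),\Phi(\sigma))\le f_{\alpha,z}(\rho,\sigma)$. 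Applying the increasing map $\ln_r$ and multiplying by $\frac{1}{\alpha-1}>0$ then gives the same conclusion.

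\textbf{Remaining case: $r<0$.} The map $x\mapsto\ln_r x$ has derivative $x^{r-1}>0$, so it is still increasing on $x>0$ and the argument above goes through unchanged. Only the value at $x=0$ needs separate care.

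The main obstacle is the Haar-twirling step inside \autoref{v}, and its convex analogue, in the general case. There $f_{\alpha,z}$ must be applied to an integral, so joint concavity has to be extended from finite convex combinations to integrals. I would handle this by approximating the Haar measure with a finite unitary design, or by a continuity argument. The twirl itself uses an ancilla $\mathcal H_2$ of the same dimension $n$ as $\mathcal H_1$. This only matters when the output dimension differs from the input dimension, and then I would pad the Hilbert spaces. Note also that the inequality cannot hold for arbitrary $(\alpha,z)$: the statement should be read as holding in the concave and convex regimes above, and I would state this restriction explicitly.
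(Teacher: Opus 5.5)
Your proposal is correct and follows the paper's proof: write $D_{\alpha,z}^{(r)}=\frac{1}{\alpha-1}\ln_r f_{\alpha,z}$, get $f_{\alpha,z}(\rho,\sigma)\le f_{\alpha,z}(\Phi(\rho),\Phi(\sigma))$ from \autoref{v} in Hiai's concave regime (and the reverse inequality in the jointly convex regimes listed in the proof of \autoref{r}), then use that $\ln_r$ is increasing and track the sign of $\frac{1}{\alpha-1}$. Your extra points (the case $r<0$, passing joint concavity to the Haar integral, and stating the parameter restriction explicitly) are sound and go beyond the paper; one small correction is that a pure-state ancilla generally needs dimension up to $n^2$ (the Kraus rank) even when input and output spaces coincide, which is harmless because the twirl of \autoref{u} gives $\Phi(\rho)\otimes I/d$ for any ancilla dimension $d$.
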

	\begin{proof}
		We can rewrite the $r$-deformed $\alpha$-$z$ Rényi relative entropy as $D_{\alpha,z}^{(r)}(\rho || \sigma) = \frac{1}{\alpha - 1} \ln_{r} f_{\alpha,z}(\rho,\sigma)$, where the trace functional $f_{\alpha,z}(\rho,\sigma)$ is defined in equation (\ref{trace_functional_1}).
		 Now, concavity of $f_{\alpha,z}(\rho,\sigma)$ in $0 < \alpha \leq 1$, and max $\{\alpha, 1-\alpha \} \leq z$ follows from a concavity theorem proven by Hiai \cite{hiai2013concavity}. Therefore from \autoref{v},we have
		\begin{equation}
			\begin{split}
				&f_{\alpha,z}(\rho,\sigma)\leq f_{\alpha,z}(\Phi(\rho),\Phi(\sigma)),\\ 
				\text{or}~
				&\tr\left( \sigma^{\frac{1-\alpha}{2z}} \rho^{\frac{\alpha}{z}} \sigma^{\frac{1-\alpha}{2z}} \right)^z \leq
				\tr\left( \Phi(\sigma)^{\frac{1-\alpha}{2z}} \Phi(\sigma)^{\frac{\alpha}{z}} \Phi(\sigma)^{\frac{1-\alpha}{2z}} \right)^z,\\
				\text{or}~
				&\ln_{r}\left( \tr\left( \sigma^{\frac{1-\alpha}{2z}} \rho^{\frac{\alpha}{z}} \sigma^{\frac{1-\alpha}{2z}} \right)^z\right)  \leq
				\ln_{r}\left( \tr\left( \Phi(\sigma)^{\frac{1-\alpha}{2z}} \Phi(\sigma)^{\frac{\alpha}{z}} \Phi(\sigma)^{\frac{1-\alpha}{2z}} \right)^z\right),\\
				\text{or}~
				&\frac{1}{\alpha-1}\ln_{r}\left( \tr\left( \sigma^{\frac{1-\alpha}{2z}} \rho^{\frac{\alpha}{z}} \sigma^{\frac{1-\alpha}{2z}} \right)^z\right)  \geq\frac{1}{\alpha-1}\ln_{r}\left( \tr\left( \Phi(\sigma)^{\frac{1-\alpha}{2z}} \Phi(\sigma)^{\frac{\alpha}{z}} \Phi(\sigma)^{\frac{1-\alpha}{2z}} \right)^z\right),\\
				\text{or}~
				& D_{\alpha,z}^{(r)}\left(\Phi(\rho)||\Phi(\sigma) \right)\leq D_{\alpha,z}^{(r)}(\rho||\sigma).
			\end{split}
		\end{equation}
		Now, \autoref{v} gives reverse inequality for jointly convex function. Therefore, convexity of $f_{\alpha,z}(\rho|| \sigma)$, as discuss in the proof of \autoref{r}, follows similar result. 
		Hence, the proof.
	\end{proof}

	\section{Upper bounds of the Tsallis divergence and their comparison}
	
	\begin{lemma}
		\textbf{Araki–Lieb–Thirring inequality \cite{audenaert2007araki}:} Let $q\geq 0$. For any two positive semi-definite operators $A$, and $B$, we have,
		\begin{equation}\label{alti1}
			\tr((B^tA^tB^t)^q)\leq\tr((BAB)^{tq}),
		\end{equation}
		for $0\leq t \leq 1$. Also, for $t\geq 1$ we have,
		\begin{equation}\label{alti2}
			\tr((B^tA^tB^t)^q)\geq\tr((BAB)^{tq}).
		\end{equation}
	\end{lemma}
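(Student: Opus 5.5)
The plan is to reduce the inequality to the classical Lieb--Thirring/Araki majorization argument via log-majorization of eigenvalues. Fix $0\le t\le 1$ and $q\ge 0$, and write $X=BAB$ and $Y=B^tA^tB^t$. Both are positive semi-definite.

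The first step is to show that the eigenvalues of $Y$ are log-majorized by those of $X^t$. That is, for $\lambda_1(\cdot)\ge\lambda_2(\cdot)\ge\cdots$,
\begin{equation*}
\prod_{i=1}^k \lambda_i(Y)\le \prod_{i=1}^k \lambda_i(X)^t,\quad k=1,\dots,n,
\end{equation*}
with equality for $k=n$. We do this in two stages.

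\begin{itemize}
\item \emph{The case $k=1$.} This reduces to the operator-norm inequality $\|B^tA^tB^t\|\le\|BAB\|^t$. By a scaling argument (homogeneity in $A$) it suffices to show that $BAB\le I$ implies $B^tA^tB^t\le I$. From $BAB\le I$ we get $A\le B^{-2}$, first assuming $B$ invertible and then passing to the limit. Since $x\mapsto x^t$ is operator monotone for $0\le t\le1$ (L\"owner--Heinz), this gives $A^t\le B^{-2t}$, hence $B^tA^tB^t\le I$.
\item \emph{General $k$.} Apply the $k=1$ inequality to the antisymmetric tensor powers $\wedge^k A$ and $\wedge^k B$. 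This uses $\wedge^k(BAB)=(\wedge^kB)(\wedge^kA)(\wedge^kB)$ and $(\wedge^k A)^t=\wedge^k(A^t)$, the latter analogous to \autoref{h4} for tensor products.
\item \emph{Equality at $k=n$.} This holds because $\det Y=(\det A\det B^2)^t=(\det X)^t$.
\end{itemize}

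The second step converts this into a trace inequality. Log-majorization implies weak majorization of the vectors $\phi(\log\lambda_i)$ for any increasing convex $\phi$. Take $\phi(s)=e^{qs}$, which is increasing and convex for $q\ge 0$. This gives $\sum_i\lambda_i(Y)^q\le\sum_i\lambda_i(X)^{tq}$, which is exactly $\tr(Y^q)\le\tr(X^{tq})$, i.e.\ (\ref{alti1}). Zero eigenvalues are handled by continuity, perturbing $A,B$ to $A+\epsilon I,B+\epsilon I$.

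For (\ref{alti2}) with $t\ge1$, we reparametrize: set $A'=A^t$, $B'=B^t$ and $s=1/t\in(0,1]$. Applying the first part to $A',B'$ with exponent $s$ and power $tq$ gives
\begin{equation*}
\tr\big((B'^sA'^sB'^s)^{tq}\big)\le\tr\big((B'A'B')^{stq}\big),
\end{equation*}
that is, $\tr((BAB)^{tq})\le\tr((B^tA^tB^t)^q)$, which is the required reverse inequality.

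The main obstacle is the $k=1$ norm inequality, which is where operator monotonicity of $x^t$ is essential. The remaining steps (antisymmetric powers and the majorization-to-trace passage) are standard. Since this is a cited result from \cite{audenaert2007araki}, the paper may simply invoke it, and the above sketch is how I would justify it independently.
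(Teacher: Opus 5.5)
The paper gives no proof of this lemma. It is quoted from \cite{audenaert2007araki} and used as a black box in \autoref{g}, with $t=1/z$ and $q=z$. So there is no internal argument to compare against.

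What you have written is the standard Araki log-majorization proof, and it is correct. Each step checks out:
\begin{itemize}
\item the $k=1$ norm inequality via L\"owner--Heinz;
\item the lift to antisymmetric powers, using multiplicativity of $\wedge^k$ and $(\wedge^k A)^t=\wedge^k(A^t)$;
\item the passage from weak log-majorization to $\sum_i\lambda_i(Y)^q\le\sum_i\lambda_i(X)^{tq}$ through the increasing convex function $s\mapsto e^{qs}$;
\item the reduction of the $t\ge 1$ case, by applying the first part to $A^t,B^t$ with exponent $1/t$ and power $tq$.
\end{itemize}

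One small wording fix is needed in the $k=1$ step. Take the limit in the norm inequality $\|B_\epsilon^tA^tB_\epsilon^t\|\le\|B_\epsilon AB_\epsilon\|^t$, with $B_\epsilon=B+\epsilon I$, rather than in the implication $BAB\le I\Rightarrow A\le B^{-2}$. The hypothesis $BAB\le I$ does not give $B_\epsilon AB_\epsilon\le I$. For example, $B=\mathrm{diag}(1,0)$ and $A=\mathrm{diag}(1,M)$ satisfy $BAB\le I$, but $B_\epsilon AB_\epsilon$ has top entry $(1+\epsilon)^2>1$.

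Compared with the bare citation, your route makes the paper self-contained and proves more. The log-majorization $\lambda(B^tA^tB^t)\prec_{\log}\lambda\bigl((BAB)^t\bigr)$ gives the trace inequality for every $f$ with $f(e^s)$ increasing and convex, not only for $x\mapsto x^q$. It also gives the corresponding inequality for all unitarily invariant norms.
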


	\begin{theorem}\label{g}
			Let $\rho$ be a density operator and $\sigma$ be positive semi-definite operator defined on a Hilbert space $\mathcal{H}$ and $\alpha \in \mathbb{R} - \{1\}$. Then
		\begin{equation}
			D_{\alpha}(\rho || \sigma) \leq D_{\alpha,z}^{(r)}(\rho\|\sigma) ~~\text{for~}  r \in \mathbb{R} - \{0\}.
		\end{equation}
		Also, equality holds for $z=r=1$.
	\end{theorem}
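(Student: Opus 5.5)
The plan is to reduce the claim to a comparison between two scalar quantities and then chain three ingredients: the Araki--Lieb--Thirring inequality, the comparison of $\ln_r(x)$ with $x-1$ from \autoref{b} and \autoref{a}, and the sign of $\frac{1}{\alpha-1}$. Write
\[
Q_z=\tr\left( \sigma^{\frac{1-\alpha}{2z}} \rho^{\frac{\alpha}{z}} \sigma^{\frac{1-\alpha}{2z}} \right)^z, \qquad Q_1=\tr\left(\rho^\alpha\sigma^{1-\alpha}\right).
\]
Since $\tr\rho=1$, the Tsallis relative entropy is $D_\alpha(\rho\|\sigma)=\frac{Q_1-1}{\alpha-1}$ by (\ref{equality_of_Tsalish_bound}). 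The target inequality is therefore
\[
\frac{Q_1-1}{\alpha-1}\le \frac{\ln_r(Q_z)}{\alpha-1}.
\]
The equality case is immediate: at $z=r=1$ we have $\ln_1(x)=x-1$ and $Q_z=Q_1$, which is exactly the computation preceding (\ref{equality_of_Tsalish_bound}).

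For $\alpha<1$, multiplying by $\alpha-1<0$ reverses the inequality, so I need $\ln_r(Q_z)\le Q_1-1$. I would split this into two steps.

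\begin{enumerate}
\item Compare $\ln_r(Q_z)$ with $Q_z-1$. For $r<1$ this is \autoref{b}, which gives $\ln_r(Q_z)\le Q_z-1$.
\item Compare $Q_z$ with $Q_1$ using (\ref{alti1}) and (\ref{alti2}). Take $B=\sigma^{\frac{1-\alpha}{2}}$, $A=\rho^{\alpha}$, $t=1/z$ and $q=z$. Then $\tr((B^tA^tB^t)^q)=Q_z$ and $\tr((BAB)^{tq})=\tr(BAB)=Q_1$. So for $z\ge 1$ (i.e.\ $t\le1$) (\ref{alti1}) gives $Q_z\le Q_1$, and this case closes.
\end{enumerate}

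For $\alpha>1$, I need the opposite inequality $\ln_r(Q_z)\ge Q_1-1$. This would use \autoref{a} for $r\ge1$ together with (\ref{alti2}) for $z\le 1$, which gives $Q_z\ge Q_1$.

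The main obstacle is that the statement claims the bound for every $r\in\mathbb{R}-\{0\}$ and every $z>0$. The lemmas above only cover the pairings ($\alpha<1$, $r<1$, $z\ge1$) and ($\alpha>1$, $r\ge1$, $z\le1$). In the remaining cases the two comparisons point in opposite directions. For instance, with $\alpha<1$ and $r>1$ we have $\ln_r(x)\ge x-1$, and the inequality then cannot hold in general. A concrete check is the commuting case with $\tr\sigma$ large, where $Q_z=Q_1>1$ and $\ln_r(Q_1)>Q_1-1$.

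So I would first verify the theorem in these natural parameter regions. I would then state explicitly that the argument needs the hypotheses $r<1\le z$ when $\alpha<1$, and $z\le 1\le r$ when $\alpha>1$. To extend beyond these regions, I would try the additional normalisation $\tr\sigma\le 1$, which by \autoref{corollary 1} gives $Q_z\le1$ for $\alpha\in[0,1)$. Even then, monotonicity of $\ln_r$ in $r$ runs the wrong way for $x<1$, so I expect the general-$r$ claim to require these restrictions rather than admit a proof.
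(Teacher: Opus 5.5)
Your proposal is correct and is essentially the paper's own proof. It uses the same two regimes ($\alpha>1$, $r\ge 1$, $0<z\le 1$ and $\alpha<1$, $r<1$, $z\ge 1$), the same Araki--Lieb--Thirring substitution $t=1/z$, $q=z$, and \autoref{a} or \autoref{b} followed by the sign of $\alpha-1$. The only difference is order: you apply \autoref{b} at $Q_z$ and then use $Q_z\le Q_1$, while the paper first passes to $\ln_r(Q_1)$ via monotonicity of $\ln_r$, which changes nothing.

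Your caveat is also sound. The paper's proof likewise covers only these two regimes. Your commuting example (for $\alpha<1<r$, $\ln_r(Q_1)>Q_1-1$ whenever $Q_1\neq 1$) shows the statement is false as written for all $r\neq 0$ and $z>0$.
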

	
	\begin{proof}
		We distribute this proof into two cases, which are as follows:
		\begin{enumerate}[label=\textbf{Case \arabic*:}, leftmargin=*]
			\item 
			Let $\alpha > 1$, $r \geq 1 \text{~and~} 0<z\leq 1.$ Also, let $t=\frac{1}{z}$ and $q=z$. Then from equation (\ref{alti2}), we have
			\begin{equation}\label{g2}
				\tr\left( \sigma^{\frac{1-\alpha}{2z}} \rho^{\frac{\alpha}{z}} \sigma^{\frac{1-\alpha}{2z}} \right)^z \geq\tr\left( \sigma^{\frac{1-\alpha}{2}}\rho^{\alpha}\sigma^{\frac{1-\alpha}{2}}\right) =\tr\left( \rho^{\alpha}\sigma^{1-\alpha}\right)~~\text{for~} 0 < z \leq 1. 
			\end{equation}
			\autoref{properties of r-logagithm 3} suggests that the $r$-logarithm is a monotone increasing function. Therefore, from equation (\ref{g2}) we say that
			\begin{equation}\label{g8}
				\ln_{r}\left[ \tr\left( \sigma^{\frac{1-\alpha}{2z}} \rho^{\frac{\alpha}{z}} \sigma^{\frac{1-\alpha}{2z}} \right)^z\right] \geq \ln_{r}\left[\tr\left( \rho^{\alpha}\sigma^{1-\alpha}\right)\right]. 
			\end{equation}
			
			Since, $\rho$ and $\sigma$ are positive semi-definite operators, therefore $\tr\left( \rho^{\alpha}\sigma^{1-\alpha}\right) \geq 0$. Now, applying \autoref{a} in equation (\ref{g8}) we get
			\begin{equation}\label{g4}
				\ln_{r}\left[ \tr\left( \sigma^{\frac{1-\alpha}{2z}} \rho^{\frac{\alpha}{z}} \sigma^{\frac{1-\alpha}{2z}} \right)^z\right] \geq \ln_{r}\left[\tr\left( \rho^{\alpha}\sigma^{1-\alpha}\right)\right]\geq \tr\left( \rho^{\alpha}\sigma^{1-\alpha}\right)-1, ~~\text{for~} r\geq 1.
			\end{equation}
			Since, $\rho$ is a density operator therefore $\tr(\rho)=1$. Using this in equation (\ref{g4}) we observe that
			\begin{equation}\label{g7}
				\begin{split}
					\ln_{r}\left[ \tr\left( \sigma^{\frac{1-\alpha}{2z}} \rho^{\frac{\alpha}{z}} \sigma^{\frac{1-\alpha}{2z}} \right)^z\right]&\geq \tr\left( \rho^{\alpha}\sigma^{1-\alpha}-\rho\right)\\
					\text{or~}
					\frac{1}{\alpha-1}\ln_{r}\left[ \tr\left( \sigma^{\frac{1-\alpha}{2z}} \rho^{\frac{\alpha}{z}} \sigma^{\frac{1-\alpha}{2z}} \right)^z\right] &\geq \tr\left(\frac{\rho^{\alpha}\sigma^{1-\alpha}-\rho}{\alpha-1} \right), ~~\text{since~} \alpha>1.
				\end{split}
			\end{equation}
			Therefore, $D_{\alpha}(\rho || \sigma) \leq D_{\alpha,z}^{(r)}(\rho\|\sigma)$ for $\alpha > 1$, $r \geq 1$ and $0 < z \leq 1$.
			
			\item 
			Let $\alpha < 1$, $r < 1$ and $z\geq 1.$ Also, let $\rho$ and $\sigma$ be positive semi-definite  operators. Putting $t=\frac{1}{z}$ and $q=z$ in equation (\ref{alti1}), we get
			\begin{equation}\label{j2}
				\tr\left( \sigma^{\frac{1-\alpha}{2z}} \rho^{\frac{\alpha}{z}} \sigma^{\frac{1-\alpha}{2z}} \right)^z \leq\tr\left( \sigma^{\frac{1-\alpha}{2}}\rho^{\alpha}\sigma^{\frac{1-\alpha}{2}}\right) =\tr\left( \rho^{\alpha}\sigma^{1-\alpha}\right)~~\text{for~}  z \geq 1. 
			\end{equation}
			From \autoref{properties of r-logagithm 3}, we say that the $r$-logarithm is a monotone increasing function. Hence, from equation (\ref{j2}) we have
			\begin{equation}\label{j7}
				\ln_{r}\left[ \tr\left( \sigma^{\frac{1-\alpha}{2z}} \rho^{\frac{\alpha}{z}} \sigma^{\frac{1-\alpha}{2z}} \right)^z\right] \leq \ln_{r}\left[\tr\left( \rho^{\alpha}\sigma^{1-\alpha}\right)\right].
			\end{equation}
			Since, $\rho$ and $\sigma$ are positive semi-definite operators, therefore $\tr\left( \rho^{\alpha}\sigma^{1-\alpha}\right) \geq 0$. Now, using \autoref{b} in equation (\ref{j7}), we get
			\begin{equation}\label{j4}
				\ln_{r}\left[ \tr\left( \sigma^{\frac{1-\alpha}{2z}} \rho^{\frac{\alpha}{z}} \sigma^{\frac{1-\alpha}{2z}} \right)^z\right] \leq \ln_{r}\left[\tr\left( \rho^{\alpha}\sigma^{1-\alpha}\right)\right]\leq \tr\left( \rho^{\alpha}\sigma^{1-\alpha}\right)-1 ~~\text{for~} r<1.
			\end{equation}
			Replacing $\tr(\rho)=1$  in equation (\ref{j4}), we have   
			\begin{equation}\label{j11}
				\begin{split}
					\ln_{r}\left[ \tr\left( \sigma^{\frac{1-\alpha}{2z}} \rho^{\frac{\alpha}{z}} \sigma^{\frac{1-\alpha}{2z}} \right)^z\right]&\leq \tr\left( \rho^{\alpha}\sigma^{1-\alpha}-\rho\right)\\
					\text{or~}
					\frac{1}{\alpha-1}\ln_{r}\left[ \tr\left( \sigma^{\frac{1-\alpha}{2z}} \rho^{\frac{\alpha}{z}} \sigma^{\frac{1-\alpha}{2z}} \right)^z\right] &\geq \tr\left(\frac{\rho^{\alpha}\sigma^{1-\alpha}-\rho}{\alpha-1} \right), ~~\text{since~} \alpha<1.
				\end{split}
			\end{equation}
			Therefore, $D_{\alpha}(\rho || \sigma) \leq D_{\alpha,z}^{(r)}(\rho\|\sigma)$ for $\alpha < 1$, $r < 1$ and $z\geq 1$.
		\end{enumerate}
			
		Combining \textbf{Case 1} and \textbf{Case 2} we have the proof. The equality follows from equation (\ref{equality_of_Tsalish_bound}).
	\end{proof}
	
	We derive the inequality (\ref{j4}) using the property of $r$-logarithm. A similar inequality is derived in \cite{fujii2021matrix} by utilizing the operator mejorization inequalities. It may be considered as a benefit of using the $r$-logarithm as the application of operator mejorization inequalities are not essential.
	
	To derive equation (\ref{j11}) from equation (\ref{j4}) we replace 1 by $\tr(\rho)$, as $\rho$ is a density operator. We can extend it further by replacing it with any positive semi-definite operator such that $0\leq \tr(\rho) \leq 1$.
	
	\autoref{g} shows that $D_{\alpha, z}^{(r)}(\rho || \sigma)$ is an upper bound of the Tsallis relative entropy $D_{\alpha}(\rho||\sigma)$. It is known \cite{fujii2021matrix, furuichi2010matrix, seo2019matrix} that the upper bound for $D_{\alpha}(\rho||\sigma)$ was also given by the Tsallis relative operator entropy as follows:
	\begin{equation}
		D_{\alpha}(\rho||\sigma) \leq -\tr \left[\rho \ln_{1-\alpha}\left(\rho^{-q/2}\sigma^q\rho^{-q/2}\right)^{1/q}\right],
	\end{equation}
	where $-1 \le \alpha \le 1$ and $q > |1-\alpha| > 0$. Now, we try to compare two upper bounds of the Tsallis relative entropy.
	
	Consider two bounds of the Tsallis relative entropy as 
	\begin{equation}
		B_{1}(\rho || \sigma) = D_{\alpha,z}^{(r)}(\rho || \sigma) \text{~and~} B_{2}(\rho || \sigma) =-\tr \left[\rho \ln_{1-\alpha}\left(\rho^{-q/2}\sigma^q\rho^{-q/2}\right)^{1/q}\right].
	\end{equation}
	Now, we have the following result for commuting positive semi-definite operators.
	\begin{theorem}
		Let $\rho$ be a density operator  and $\sigma$ be a positive semi-definite operator, such that, they commute, then $B_{2}(\rho || \sigma) \leq B_{1}(\rho || \sigma).$
	\end{theorem}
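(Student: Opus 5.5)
The plan is to show that in the commuting case $B_{2}(\rho\|\sigma)$ collapses exactly to the Tsallis relative entropy $D_{\alpha}(\rho\|\sigma)$. The claimed inequality then follows immediately from \autoref{g}, which says $D_{\alpha}(\rho\|\sigma)\leq D_{\alpha,z}^{(r)}(\rho\|\sigma)=B_{1}(\rho\|\sigma)$.

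\textbf{Step 1 (simultaneous diagonalisation).} Since $\rho$ and $\sigma$ commute and are positive semi-definite, I write them in a common orthonormal eigenbasis $\{\ket{i}\}$:
\[
\rho=\sum_i p_i\ket{i}\bra{i},\qquad \sigma=\sum_i s_i\ket{i}\bra{i}.
\]
The negative powers $\rho^{-q/2}$ are understood as generalized inverses on $\Supp(\rho)$. This is consistent with how $B_{2}$ enters only through $\tr[\rho\,\cdot\,]$, so every term with $p_i=0$ is killed by the prefactor $\rho$.

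\textbf{Step 2 (the operator mean simplifies).} On $\Supp(\rho)$ everything is diagonal in the same basis, so
\[
\left(\rho^{-q/2}\sigma^{q}\rho^{-q/2}\right)^{1/q}=\sum_{p_i>0}\left(p_i^{-q}s_i^{q}\right)^{1/q}\ket{i}\bra{i}=\rho^{-1}\sigma .
\]
This holds for every admissible $q$, so the parameter $q$ disappears entirely.

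\textbf{Step 3 (evaluate $B_2$).} Using Definition~\ref{r_logarithm_definiton} with deformation parameter $1-\alpha$, I get
\[
B_{2}(\rho\|\sigma)=-\sum_{p_i>0}p_i\,\frac{(s_i/p_i)^{1-\alpha}-1}{1-\alpha}
=\frac{\sum_i p_i-\sum_i p_i^{\alpha}s_i^{1-\alpha}}{1-\alpha}
=\tr\left(\frac{\rho-\rho^{\alpha}\sigma^{1-\alpha}}{1-\alpha}\right).
\]
The last expression is $D_{\alpha}(\rho\|\sigma)$ by (\ref{equality_of_Tsalish_bound}), using $\tr\rho=1$ and commutativity to write $\tr(\rho^{\alpha}\sigma^{1-\alpha})$ as the sum above.

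\textbf{Step 4 (conclude).} Applying \autoref{g} gives
\[
B_{2}(\rho\|\sigma)=D_{\alpha}(\rho\|\sigma)\leq B_{1}(\rho\|\sigma).
\]
This is valid in the parameter ranges where \autoref{g} was actually established: $\alpha>1$, $r\geq1$, $0<z\leq1$, or $\alpha<1$, $r<1$, $z\geq1$. Intersecting with the range where $B_2$ is defined gives $\alpha<1$, $r<1$, $z\geq 1$, with $q>1-\alpha$. Equality holds when $z=r=1$.

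\textbf{Main obstacle.} No step is computationally hard. The only delicate point is bookkeeping. First, I must justify the support and generalized-inverse convention in Step 2, in particular that eigenvalues with $p_i=0$ contribute nothing to $B_2$ or to $\tr(\rho^{\alpha}\sigma^{1-\alpha})$. Second, I must make sure the parameter ranges for $B_2$ and for \autoref{g} are stated compatibly. I would handle the first by working on $\Supp(\rho)$ throughout, which is consistent with the hypothesis $\Supp(\rho)\subseteq\Supp(\sigma)$ in Definition~\ref{r-z}.
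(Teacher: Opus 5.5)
Your argument is correct and is essentially the paper's own: the paper also notes that for commuting operators $B_2$ collapses to $\frac{1}{\alpha-1}\left[\tr(\rho^{\alpha}\sigma^{1-\alpha})-1\right]=D_{\alpha}(\rho\|\sigma)$, but instead of citing \autoref{g} it reduces $B_1$ directly to $\frac{1}{\alpha-1}\ln_{r}\tr(\rho^{\alpha}\sigma^{1-\alpha})$ and applies \autoref{b} for $\alpha<1$, $r<1$. That direct route shows the Araki--Lieb--Thirring step inside \autoref{g} is an equality here, so the restriction $z\geq 1$ you inherit from \autoref{g} is unnecessary: $B_1$ does not depend on $z$ when $\rho$ and $\sigma$ commute.
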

	
	\begin{proof}
		Since, $\rho$ and $\sigma$ are commutative, therefore $B_{1}(\rho || \sigma)$ and $B_{2}(\rho || \sigma)$ reduced to $B_{1}'(\rho || \sigma)$ and $B_{2}'(\rho || \sigma)$, where
		$$ B_{1}'(\rho || \sigma) = \frac{1}{\alpha-1}\ln_{r}\left[ \tr\left( \rho^{\alpha} \sigma^{1-\alpha} \right)\right] \text{~and~} B_{2}'(\rho || \sigma) = \frac{1}{\alpha-1}\left[ \tr\left( \rho^{\alpha} \sigma^{1-\alpha} \right) -1 \right].$$
		Now from \autoref{b}, $\alpha < 1$ and $r < 1$  we have
		\begin{equation}
			\begin{split}
				\ln_{r}\left[\tr\left( \rho^{\alpha}\sigma^{1-\alpha}\right)\right] &\leq \tr\left( \rho^{\alpha}\sigma^{1-\alpha}\right)-1,\\
				\text{or}~
				\frac{1}{\alpha-1}\ln_{r}\left[ \tr\left( \rho^{\alpha} \sigma^{1-\alpha} \right)\right] &\geq \frac{1}{\alpha-1}\left[ \tr\left( \rho^{\alpha} \sigma^{1-\alpha} \right) -1 \right].
			\end{split}	
		\end{equation}
		Therefore, $ B_{1}(\rho || \sigma) \geq B_{2}(\rho || \sigma) $.
	\end{proof}
	
	Whereas considering $\rho$ and $\sigma$ are non-commutative positive semi-definite operators, we observe that the $B_1(\rho || \sigma)$ is a tighter bound of the Tsallis relative entropy than $B_2(\rho || \sigma)$. We have the following numerical observations. Below, we compare $B_1(\rho || \sigma)$ and $B_2(\rho || \sigma)$ for $ 0 \leq \alpha < 1$ in different cases: 
	
	\begin{enumerate}[label=\textbf{Case \arabic*:}, leftmargin=*]
		\item 
		Let both $\tr(\rho)$ and $\tr(\sigma)$ belong to $(0, 1)$. Following equations (\ref{equality_of_Tsalish_bound}) and (\ref{bound definition}), we observe that the Tsallis relative entropy $D_{\alpha}(\rho || \sigma)$ may be negative. If we choose $\rho$ and $\sigma$ randomly such that, $\tr(\rho) \leq \tr\left( \rho^{\alpha}\sigma^{1-\alpha} \right) $
		 then $D_{\alpha}(\rho || \sigma)$ will be negative. 
		
		\item Let $\tr(\rho)$ in $(0,1)$ and $\tr(\sigma)=1$. In this case, we have similar observation as Case 1.
		
		\item Let $\tr(\rho)=1$ and $\tr(\sigma)$ in $(0,1)$. In this case, both $B_1(\rho || \sigma)$ and $B_2(\rho || \sigma)$ are the upper bounds of the Tsallis relative entropy, but the relation between $B_1(\rho || \sigma)$ and $B_2(\rho || \sigma)$ depends on the choice of $\rho$ and $\sigma$.

		For example, take
		$\rho=
		\begin{bmatrix}
			0.6 & 0.3 \\
			0.3 & 0.4
		\end{bmatrix}$
		and
		$\sigma=
		\begin{bmatrix}
			0.4 & 0.2 \\
			0.2 & 0.4
		\end{bmatrix}$. 
		For $\alpha=0.01$, $r=0.7$, $z=2$ and $q=1.5$ we have $B_1=0.207025$ and $B_2=0.203554$ that is $B_1(\rho || \sigma) > B_2(\rho || \sigma)$. Again for $\alpha=0.99$, $r=0.7$, $z=2$ and $q=1.5$ we have $B_1(\rho || \sigma) = 0.252861$ and $B_2(\rho || \sigma) = 0.264951$ that is $B_2(\rho || \sigma) > B_1(\rho || \sigma)$. We plot the values of $B_1(\rho || \sigma)$ and $B_2(\rho || \sigma)$ with respect to $\alpha \in (0, 1)$ considering $r= 0.7$, $z=2$, $q= 1.5$, in \autoref{fig:sub1}.
		\begin{figure}[h]
			\centering
			\begin{subfigure}{0.45\textwidth}
				\centering
				\includegraphics[width=\linewidth]{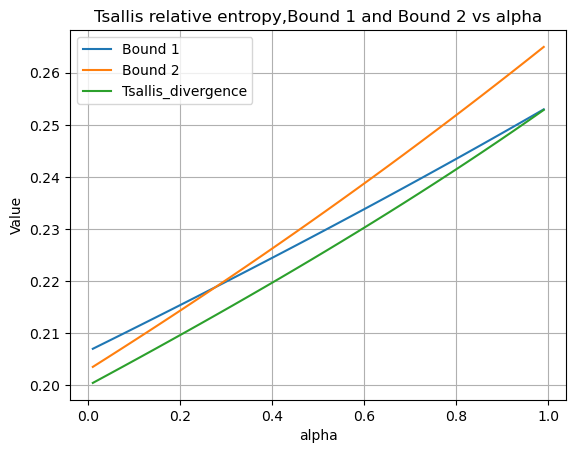}
				\caption{Sometime $B_1(\rho || \sigma)$ stays below $B_2(\rho || \sigma)$ and sometime $B_1(\rho || \sigma)$ stays above $B_2(\rho || \sigma)$.}
				\label{fig:sub1}
			\end{subfigure}
			\hfill
			\begin{subfigure}{0.45\textwidth}
				\centering
				\includegraphics[width=\linewidth]{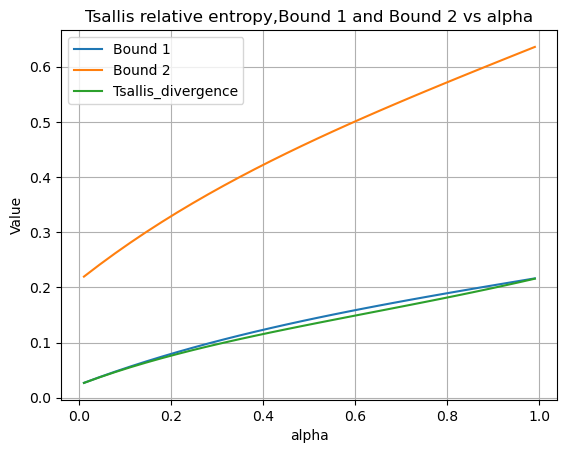}
				\caption{Here, $B_1(\rho || \sigma)$ provides a better bound to the Tsallis relative entropy than $B_2(\rho || \sigma)$.}
				\label{fig:sub2}
			\end{subfigure}
			\caption{For different choices of $\rho$ and $\sigma$ the bounds of Tsallis relative entropy behaves differently, when $\tr(\rho)=1$ and $0< \tr(\sigma) <1$.}
			\label{fig:main}
		\end{figure}

		Consider another examples. We consider $\rho$ and $\sigma$ as
		$\rho=
		\begin{bmatrix}
			0.2197 & 0.3956 \\
			0.3956 & 0.7803
		\end{bmatrix}$
		and
		$\sigma=
		\begin{bmatrix}
			0.1429 & 0.1677 \\
			0.1677 & 0.8334
		\end{bmatrix}$. Fixing $r= 0.7$, $z=2$, $q= 1.5$ we get $D_{\alpha}(\rho || \sigma)	\leq B_{1}(\rho || 	\sigma)\leq B_{2}(\rho || \sigma)$ when $\alpha \in (0, 1)$. This relation is plotted in \autoref{fig:sub2}.
		
		\item 
		Let $\tr(\rho)>1$ or $\tr(\sigma)>1$. In this case, we see that the Tsallis relative entropy $D_{\alpha}(\rho || \sigma)$ may be negative.
		For example take
		$\rho=
		\begin{bmatrix}
			0.8809 & -0.1665 \\
			-1.6665 & 0.90493
		\end{bmatrix}$
		and
		$\sigma=
		\begin{bmatrix}
			0.2857 & 0.3354 \\
			0.3354 & 1.666
		\end{bmatrix}$ with $\alpha=0.01$, we have $D_{\alpha}(\rho || \sigma)=-0.1582.$
				
		\item 
		Let $\tr(\rho)=1$ and $\tr(\sigma)=1$, that is both $\rho$ and $\sigma$ are density operators. Here, we numerically observe that for any two operators $\rho$ and $\sigma$, $B_1(\rho || \sigma)$ is a tighter bound of Tsallis relative entropy than $B_2(\rho || \sigma)$. For examples, consider two randomly chosen positive semi-definite operators
		$\rho=
		\begin{bmatrix}
			0.6 & 0.3 \\
			0.3 & 0.4
		\end{bmatrix}$
		and
		$\sigma=
		\begin{bmatrix}
			0.5 & 0.4 \\
			0.4 & 0.5
		\end{bmatrix}$.
		We plot $B_1(\rho || \sigma)$, $B_2(\rho || \sigma)$ and Tsallis relative entropy with respect to $\alpha \in (0, 1)$ for different values of $r \in \{0.5, 0.7, 0.8, 0.9\}$, $z=2$ and $q \in \{.991, 1\}$ in \autoref{tighter_bound_figure}.
		\begin{figure}
			\centering
			\includegraphics[scale = 0.5]{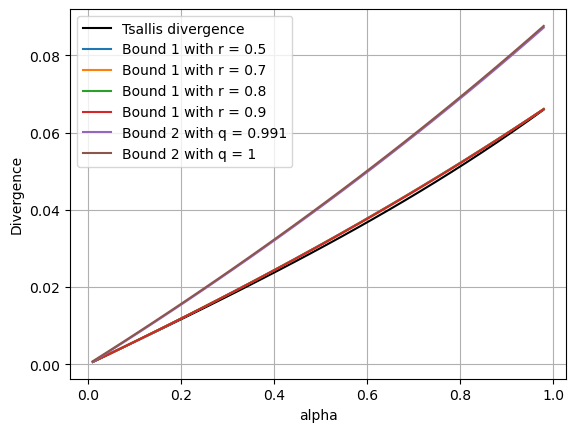}
			\caption{When $\rho$ and $\sigma$ are non-commutative density operators, $B_1(\rho||\sigma)$ acts as a bound of Tsallis relative entropy, which is tighter than $B_2(\rho||\sigma)$. } 
			\label{tighter_bound_figure}
		\end{figure}
	\end{enumerate}

	\section{Conclusion}
	The $\alpha$-$z$-R\'{e}nyi relative entropy was defined in \cite{audenaert2015alpha}. It utilizes the natural logarithm. As a drawback, the natural logarithm is not defined at $x=0$. In this work, we consider the $r$-logarithm in place of the natural logarithm in the expression of the $\alpha$-$z$-R\'{e}nyi relative entropy and formulate the $r$-deformed $\alpha$-$z$-R\'{e}nyi relative entropy $D_{\alpha,z}^{(r)}(\rho||\sigma)$. The $r$-logarithm is defined at $x=0$. In the limiting case, when $r \to 0$ we find the $\alpha$-$z$-R\'{e}nyi relative entropy from $D_{\alpha,z}^{(r)}(\rho||\sigma)$. We discuss different properties of $r$-logarithm, which are essential for our investigations. We also prove a number of characteristics of $D_{\alpha,z}^{(r)}(\rho||\sigma)$, which includes non-negativity, unitary invariance, pseudo-additivity, joint convexity, and the data-processing inequality. Interestingly, we observe that $D_{\alpha,z}^{(r)}(\rho||\sigma)$ is an upper bound of the Tsallis relative entropy. There are other known upper bounds of Tsallis relative entropy discussed in literature \cite{fujii2021matrix}. We justify that the new bound is tighter than the previous bound for the density operators. In future, the $\alpha$-$z$-R\'{e}nyi relative entropy may be used in different problems of classical and quantum information theory.
	

\end{document}